\theoremstyle{definition}
\newtheorem{theorem}{Theorem}[section]
\newtheorem{corollary}[theorem]{Corollary}
\newtheorem{lemma}[theorem]{Lemma}
\newtheorem{proposition}[theorem]{Proposition}
\newtheorem{remark}[theorem]{Remark}
\numberwithin{equation}{section} 
\def\@seccntformat#1{\@ifundefined{#1@cntformat}%
	{\csname the#1\endcsname\quad}
	{\csname #1@cntformat\endcsname}
}
\newif\ifShowComments
\def\strutdepth{\dp\strutbox}
\def\druk#1{\strut\vadjust{\kern-\strutdepth
        {\vtop to \strutdepth{%
                \baselineskip\strutdepth\vss
                        \llap{\hbox{#1}\quad}\null}}}}
\title{\bf
Bivariate log-symmetric models: distributional properties, parameter estimation and an application to fatigue data analysis
}
\author[1, 2]{Roberto Vila \thanks{rovig161@gmail.com}} 
\author[2]{\, Narayanaswamy Balakrishnan \thanks{bala@mcmaster.ca
} }
\author[1]{\, Helton Saulo  \thanks{heltonsaulo@gmail.com} }
\author[1]{Ana Protazio \thanks{ana.protazio@gmail.com}}
\affil[1]{Department of Statistics, University of
	 Bras\'ilia, Bras\'ilia, Brazil}
\affil[2]{
Department of Mathematics and Statistics, McMaster University, Hamilton, Ontario, Canada}
\begin{document}
\maketitle
\newlength\mywidth
\begin{abstract}
{
The bivariate Gaussian distribution has been a key model for many developments in statistics. However, many real-world phenomena generate data that follow asymmetric distributions, and consequently bivariate normal model is inappropriate in such situations. Bidimensional log-symmetric models have attractive properties and can be considered as good alternatives in these cases. In this paper, we discuss bivariate log-symmetric distributions and their characterizations. We establish several distributional properties and obtain the maximum likelihood estimators of the model parameters. A Monte Carlo simulation study is performed for examining the performance of the developed parameter estimation method. A real data set is finally analyzed to illustrate the proposed model and the associated inferential method.
}

\end{abstract}
\smallskip
\noindent
{\small {\bfseries Keywords.} {Bivariate Log-symmetric Models $\cdot$ Monte Carlo simulation $\cdot$ maximum likelihood method $\cdot$ R software.}}
\\
{\small{\bfseries Mathematics Subject Classification (2010).} {MSC 60E05 $\cdot$ MSC 62Exx $\cdot$ MSC 62Fxx.}}


\section{Introduction}
\noindent

A distribution is said to be log-symmetric when the corresponding random variable and its reciprocal have the same distribution \cite[see][]{Jones2008}. A characterization of distributions of this type can be constructed by taking the exponential function of a symmetric random variable. Hence, log-symmetric distributions may be used to describe strictly positive data. The class of log-symmetric distributions is quite broad and includes a large portion of bimodal distributions and those with lighter or heavier tails than the log-normal distribution; see \citet{Vanegas2016}. Some examples of log-symmetric distributions are log-normal, log-Student-$t$, log-logistic, log-Laplace, log-power-exponential, log-slash, etc; see \citet{Crow1988}, \cite{Jones2008}, and \cite{Vanegas2016}, for pertinent details.

Another important feature of the log-symmetric class is that they are closed under scale change and under reciprocity, according to  \cite{Puig2008}, which are also desirable properties for distributions that are used to describe strictly positive data. Further, log-symmetric models allow one to model the median or the asymmetry (relative dispersion).

In addition, the log-symmetric class has many other desirable statistical properties. For example, the two parameters of the log-symmetric distribution are orthogonal and they can be interpreted directly as median and skewness (or relative dispersion, taking into account two parameters that are interpreted as measures of position and scale, as stated by \cite{Vanegas2016}, which are, in the context of asymmetric distributions, the ones that mean the most, being complete measures of location and shape, respectively.

The main objective of this work is to extend in a natural way the definition of univariate log-symmetric distributions to the bivariate case, to
study their main statistical properties, to develop maximum likelihood (ML) method for the estimation of model parameters, and finally to show its applicability to the analysis of fatigue data.

The rest of this work is organized as follows. In Section \ref{Sec:2}, the bivariate log-symmetric (BLS) model is proposed. In Section \ref{Sec:3}, some mathematical properties such as stochastic representation, quantile function, conditional distribution, Mahalanobis distance, independence, moments, correlation function, among others, are discussed.
In Section \ref{Sec:4}, we describe the ML method for the estimation of the BLS model parameters.
In Section \ref{Sec:5}, we perform a Monte Carlo simulation study for evaluating the performance of the ML estimators.
In Section \ref{Sec:6}, we apply the BLS models to a data set on material fatigue to demonstrate the practical utility of the BLS models. Finally, in Section \ref{Sec:7}, some concluding remarks are made.

\section{Bivariate log-symmetric model}\label{Sec:2}
\noindent

A continuous random vector $\boldsymbol{T}=(T_1,T_2)$ is said to follow a bivariate log-symmetric (BLS) distribution if its joint probability density function (PDF) is given by
\begin{eqnarray}\label{PDF}
	f_{T_1,T_2}(t_1,t_2;\boldsymbol{\theta})
	=
	{1\over t_1t_2\sigma_1\sigma_2\sqrt{1-\rho^2}Z_{g_c}}\,
	g_c\Biggl(
	{\widetilde{t_1}^2-2\rho\widetilde{t_1}\widetilde{t_2}+\widetilde{t_2}^2
	\over 
	1-\rho^2}
	\Biggr),
	\quad 
	t_1,t_2>0,
\end{eqnarray}
where
$$
	\widetilde{t_i}
	=
	\log\biggl[\Bigl({t_i\over \eta_i}\Bigr)^{1/\sigma_i}\biggr], \ \eta_i=\exp(\mu_i), \ i=1,2, \nonumber
$$
with $\boldsymbol{\theta}=(\eta_1,\eta_2,\sigma_1,\sigma_2,\rho)$ being the parameter vector, $\mu_i\in\mathbb{R}$, $\sigma_i>0$, $i=1,2$; $\rho\in(-1,1)$
; $Z_{g_c}>0$ is the partition function given by
\begin{align}\label{partition function}
Z_{g_c}
&=
\int_{0}^{\infty}\int_{0}^{\infty}
	{1\over t_1t_2\sigma_1\sigma_2\sqrt{1-\rho^2}}\,
	g_c\Biggl(
{\widetilde{t_1}^2-2\rho\widetilde{t_1}\widetilde{t_2}+\widetilde{t_2}^2
	\over 
	1-\rho^2}
\Biggr)\, {\rm d}t_1{\rm d}t_2,
\end{align}
and $g_c$ is a scalar function, referred to as the density generator; see  \cite{Fang1990}.
We use, in this case, the notation $\boldsymbol{T}\sim {\rm BLS}(\boldsymbol{\theta},g_c)$. 
In this paper, we prove that, when it exists, the variance-covariance matrix of a random vector $\boldsymbol{T}\sim {\rm BLS}(\boldsymbol{\theta},g_c)$, denoted by $K_{\boldsymbol{T}}$, is a matrix function of the following dispersion matrix
	(see Subsections \ref{moments} and \ref{Correlation function}):
\begin{align*}
\boldsymbol{\Sigma}=
	\begin{pmatrix}
	\sigma_1^2 & \rho\sigma_1\sigma_2
	\\
	\rho\sigma_1\sigma_2 & \sigma_2^2
	\end{pmatrix}.
\end{align*}
In other words, $K_{\boldsymbol{T}}=\psi(\boldsymbol{\Sigma})$ for some matrix function $\psi: \mathcal{M}_{2,2}\longmapsto \mathcal{M}_{2,2}$, where $\mathcal{M}_{2,2}$ denotes the set of all 2-by-2 real matrices.

Based on the works  \cite{Saulo2017} and  \cite{Vanegas2016}, Table \ref{table:1} presents some examples of bivariate log-symmetric distributions, and some BLS PDF plots are presented in Figure \ref{fig:bls-pdfs}.

\begin{table}[H]
	\caption{Partition functions $(Z_{g_c})$ and density generators $(g_c)$ for some BLS distributions.}
	\vspace*{0.15cm}
	\centering 
	\begin{tabular}{llll} 
		\hline
		Distribution 
		& $Z_{g_c}$ & $g_c$ & Parameter 
		\\ [0.5ex] 
		\noalign{\hrule height 1.7pt}
		Bivariate Log-normal
		& $2\pi$ & $\exp(-x/2)$ & $-$ 
		\\ [1ex] 
		Bivariate Log-Student-$t$
		& ${{\Gamma({\nu/ 2})}\nu\pi\over{\Gamma({(\nu+2)/ 2})}}$  
		& $(1+{x\over\nu})^{-(\nu+2)/ 2}$  &  $\nu>0$
		\\ [1ex]
		Bivariate Log-Pearson Type VII 
		& ${\Gamma(\xi-1)\theta\pi\over\Gamma(\xi)}$  & $(1+{x\over\theta})^{-\xi}$  & $\xi>1$, $\theta>0$
		\\ [1ex]
		Bivariate Log-hyperbolic
		& ${2\pi (\nu+1)\exp(-\nu)\over \nu^2}$ & $\exp(-\nu\sqrt{1+x})$ &  $\nu>0$
		\\ [1ex]   
		Bivariate Log-Laplace 
		& $\pi$  & $K_0(\sqrt{2x})$ & $-$
		\\ [1ex]   
		Bivariate Log-slash
		& ${\pi\over \nu-1}\, 2^{3-\nu\over 2}$ & $ x^{-{\nu+1\over 2}} \gamma({\nu+1\over 2},{x\over 2})$ & $\nu>1$
		\\ [1ex]   
		Bivariate Log-power-exponential
		& ${2^{\xi+1}(1+\xi)\Gamma(1+\xi)}\pi$  & ${\exp\bigl(-{1\over 2}\, x^{1/(1+\xi)}\bigr)}$ & $-1<\xi\leqslant 1$
		\\ [1ex]   
		{Bivariate Log-Logistic}
		& $\pi/2$  & {${\exp(-x)\over (1+\exp(-x))^2}$} &  {$-$}
	\\ [1ex] 
	\hline	
	\end{tabular}
	\label{table:1} 
\end{table}
\noindent
In Table \ref{table:1} above, $\Gamma(t)=\int_0^\infty x^{t-1} \exp(-x) \,{\rm d}x$, $t>0$, is the complete gamma function,
$K_0(u)=\int_0^\infty t^{-1} \exp(-t-{u^2\over 4t}) \,{\rm d}t/2$, $u>0$, is the Bessel function of the third kind
\cite[for more details on some properties of $K_0$, see appendix of ][]{Kotz2001},
and $\gamma(s,x)=\int_{0}^{x}t^{s-1}\exp(-t)\,{\rm d}t$ is the lower incomplete gamma function. 
%

\begin{figure}[htb!]
	\centering
	\subfigure[log-normal]{\includegraphics[scale=0.35]{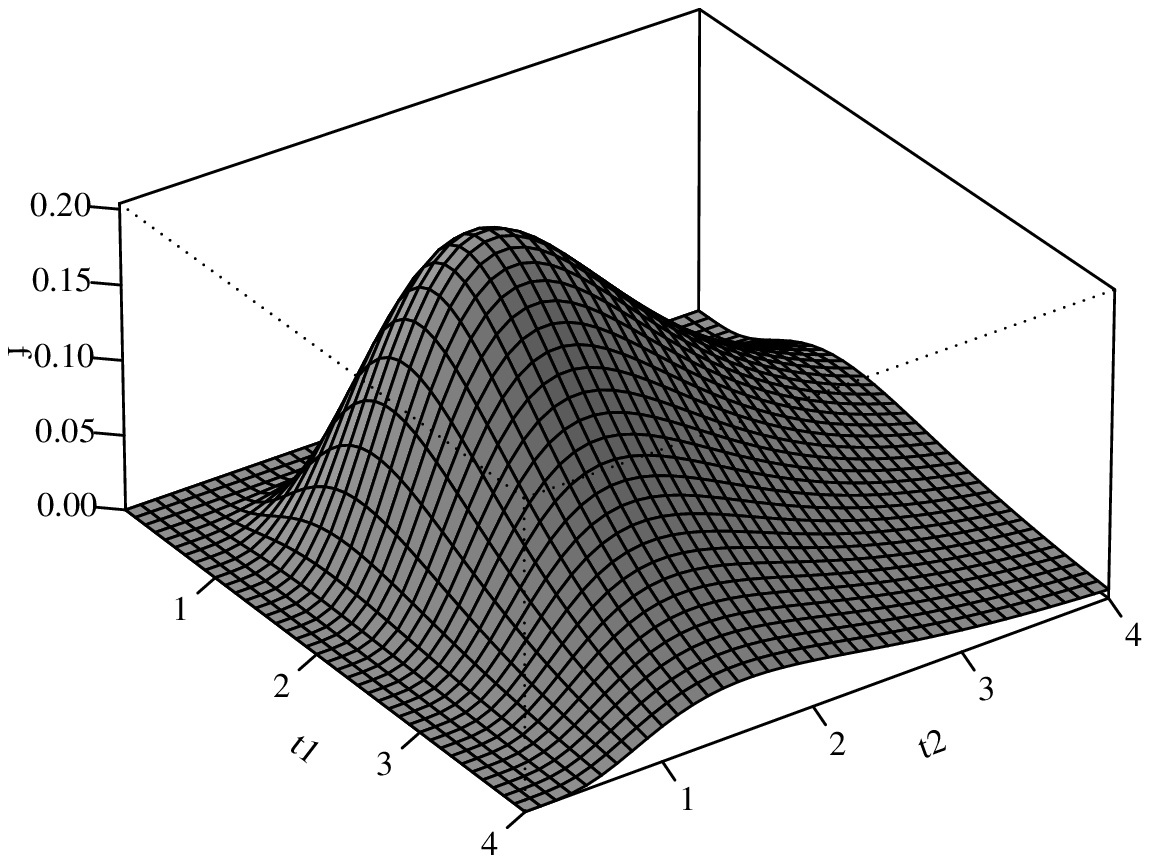}}
	\subfigure[log-Student-$t$ ($\nu = 3$)]{\includegraphics[scale=0.35]{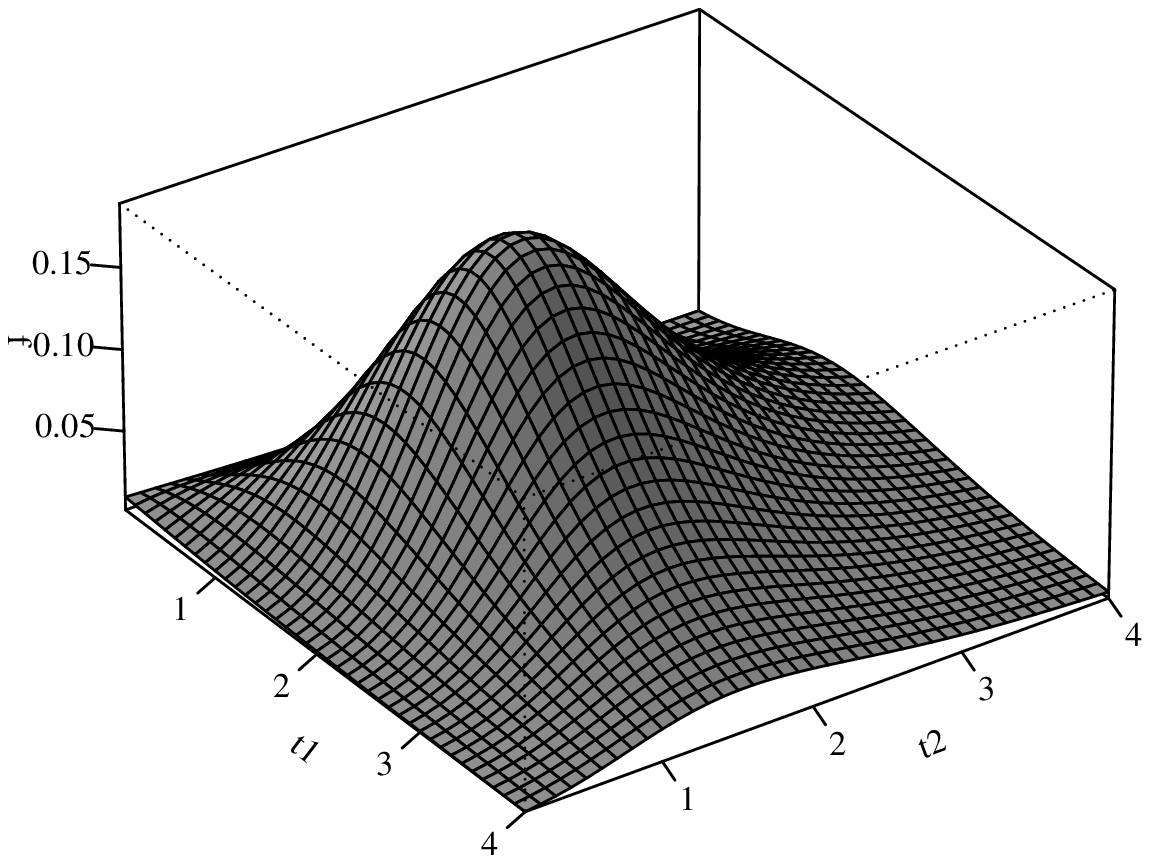}}
	\subfigure[Log-Pearson Type VII ($\xi = 5,\theta = 22$)]{\includegraphics[scale=0.35]{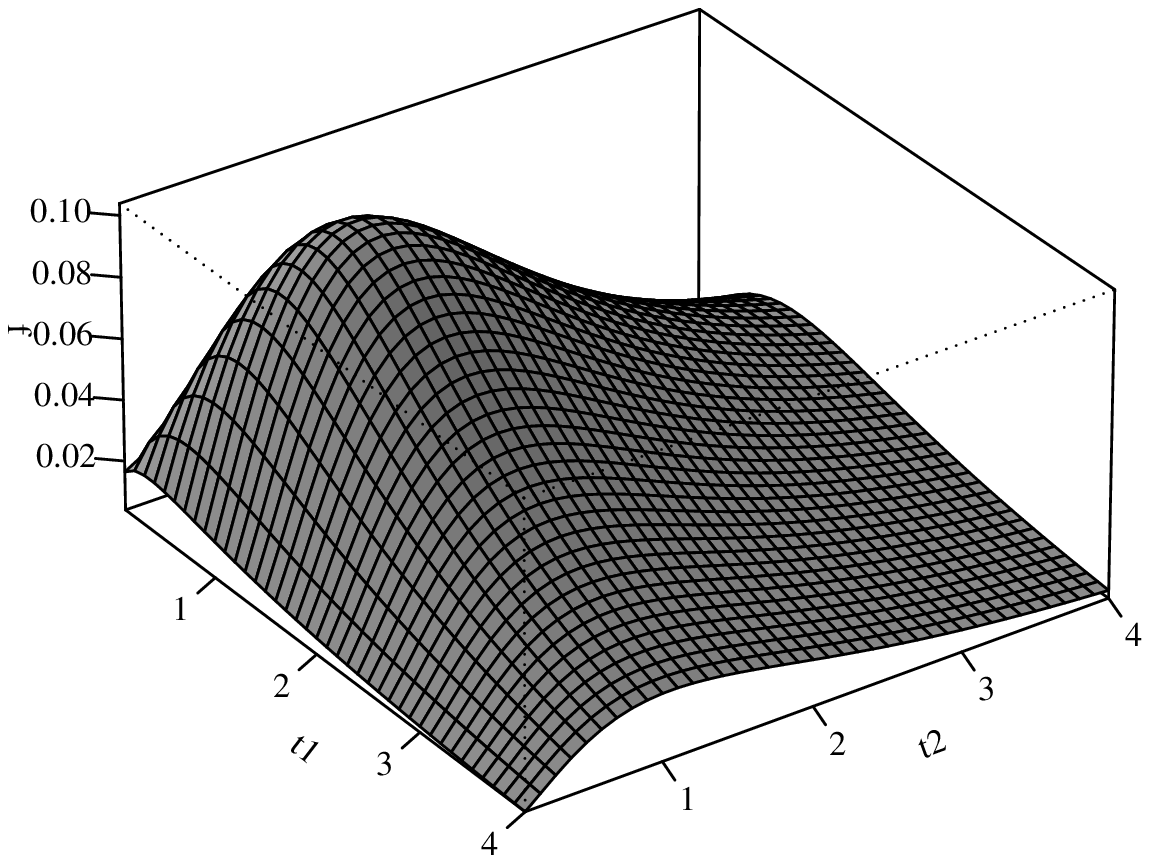}}
	\subfigure[log-hyperbolic ($\nu = 2$)]{\includegraphics[scale=0.35]{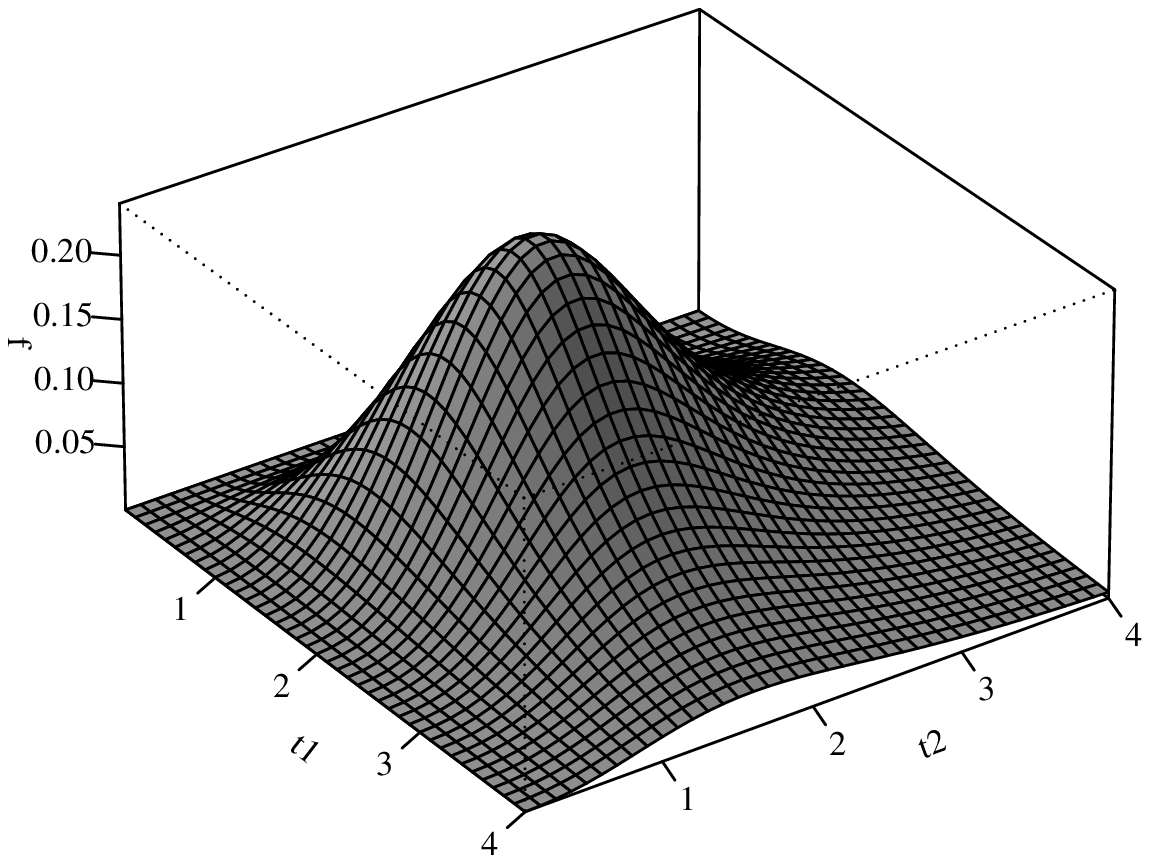}}
	\subfigure[log-slash ($\nu = 4$)]{\includegraphics[scale=0.35]{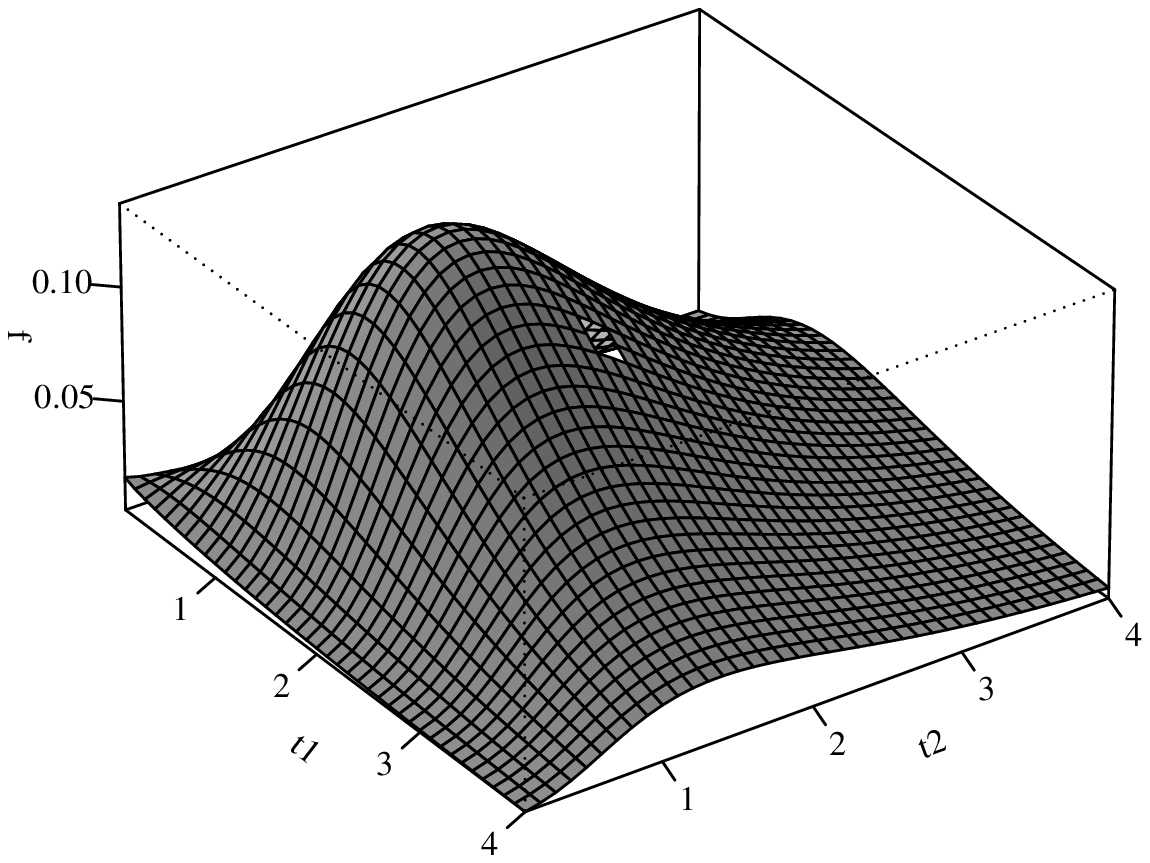}}
	\subfigure[log-power-exponential ($\xi = 0.5$)]{\includegraphics[scale=0.35]{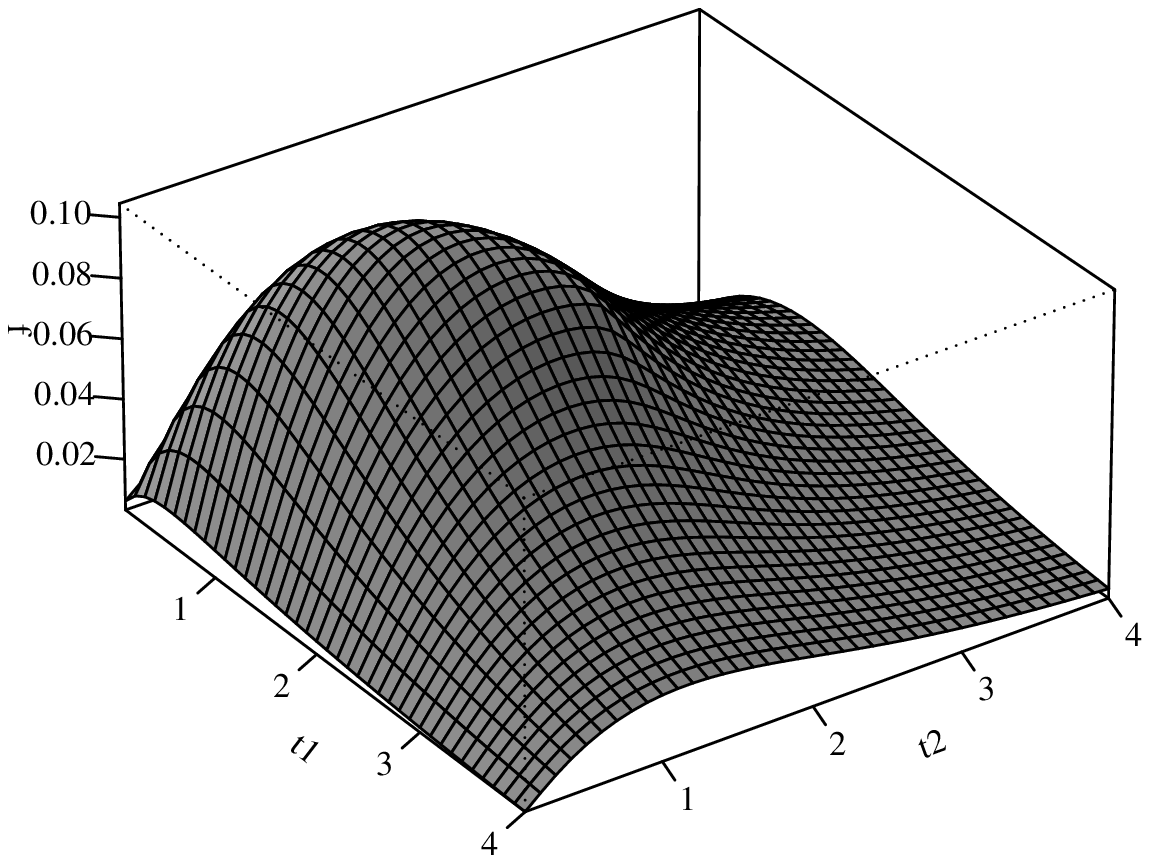}}
	\subfigure[log-logistic]{\includegraphics[scale=0.35]{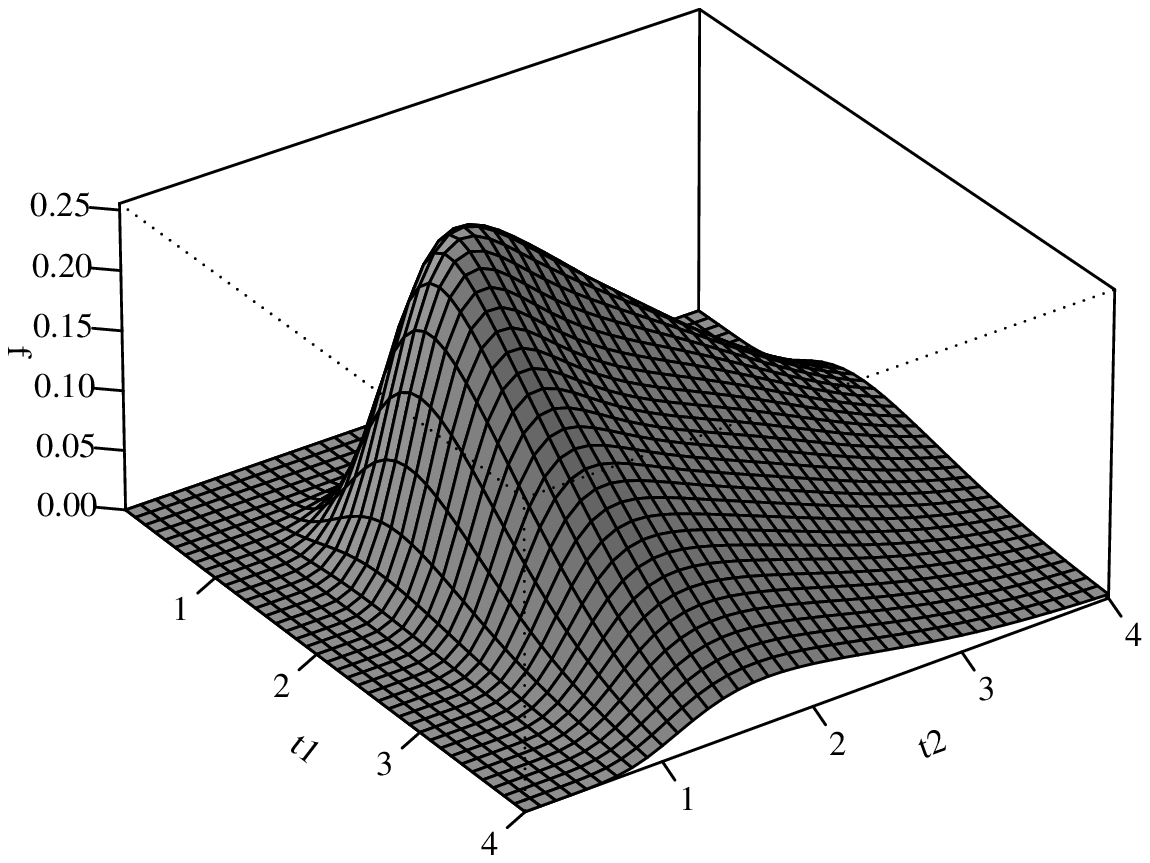}}
	\subfigure[log-Laplace]{\includegraphics[scale=0.35]{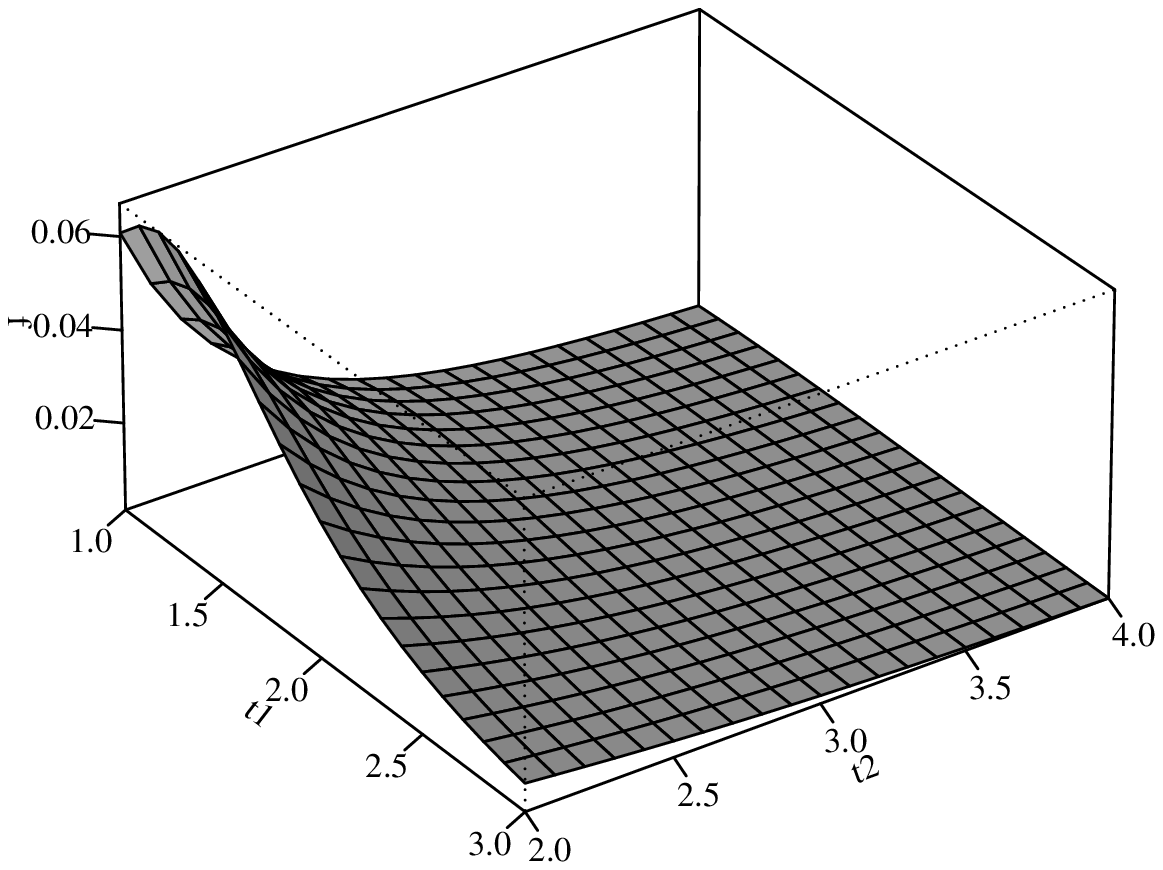}}
	\caption{BLS PDFs with $\boldsymbol{\theta}_*=(2,2,0.5,0.5,0)$.}
	\label{fig:bls-pdfs}
\end{figure}

From \eqref{PDF}, it is clear that the random vector $\boldsymbol{X}=(X_1,X_2)$, with $X_i=\log(T_i)$, $i=1,2$, has a bivariate elliptically symmetric (BES) distribution; see p. 592 of   \cite{Bala2009}. In other words, the PDF of $\boldsymbol{X}$ is
\begin{eqnarray}\label{PDF-symmetric}
f_{X_1,X_2}(x_1,x_2;\boldsymbol{\theta}_*)
=
{1\over \sigma_1\sigma_2\sqrt{1-\rho^2}Z_{g_c}}\,
g_c\Biggl(
{\widetilde{x_1}^2-2\rho\widetilde{x_1}\widetilde{x_2}+\widetilde{x_2}^2
	\over 
	1-\rho^2}
\Biggr),
\quad 
-\infty<x_1,x_2<\infty,
\end{eqnarray}
where
$$
\widetilde{x_i}={x_i-\mu_i\over\sigma_i}, \ i=1,2, \nonumber
$$
with $\boldsymbol{\theta}_*=(\mu_1,\mu_2,\sigma_1,\sigma_2,\rho)$ being the parameter vector and $Z_{g_c}$ is the partition function given in \eqref{partition function}. In this case, we shall use the notation $\boldsymbol{X}\sim {\rm BES}(\boldsymbol{\theta}_*, g_c)$.

A simple calculation shows that the joint cumulative distribution function (CDF) of $\boldsymbol{T}\sim {\rm BLS}(\boldsymbol{\theta},g_c)$, denoted by $F_{T_1,T_2}(t_1,t_2;\boldsymbol{\theta})$, can be expressed as
\begin{align*}
F_{T_1,T_2}(t_1,t_2;\boldsymbol{\theta})
=
F_{X_1,X_2}\big(\log(t_1),\log(t_2);\boldsymbol{\theta}_*\big),
\end{align*}
with $F_{X_1,X_2}(x_1,x_2;\boldsymbol{\theta}_*)$ being the CDF of $\boldsymbol{X}\sim {\rm BES}(\boldsymbol{\theta}_*, g_c)$. Except in the case of bivariate normal, there is no closed-form expression for the CDF of $\boldsymbol{X}$.

\section{Some basic properties} \label{Sec:3}
\noindent

In this section, some distributional properties of the proposed bivariate log-symmetric distributions are discussed.

\subsection{Characterization of the partition function $Z_{g_c}$}

\begin{proposition}\label{partition function-simplicado}
	The partition function $Z_{g_c}$ given in \eqref{partition function} is independent of the parameter vector $\boldsymbol{\theta}$. To be specific, we have
	\begin{align*}
	Z_{g_c}
	=
	\int_{-\infty}^{\infty}\int_{-\infty}^{\infty} 
	g_c\big({z_1}^2+{z_2}^2\big)
	\, {\rm d}{z_1}{\rm d}{z_2}
	=
	\pi 
	\int_{0}^{\infty}
	g_c(u)
	\, {\rm d}{u}.
	\end{align*}
\end{proposition}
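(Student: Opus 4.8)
The plan is to compute the double integral defining $Z_{g_c}$ by a change of variables and then reduce the resulting planar integral to a one-dimensional one using polar coordinates. First I would pass from $(t_1,t_2)$ to $(x_1,x_2) = (\log t_1, \log t_2)$; since ${\rm d}x_i = {\rm d}t_i / t_i$ and each $t_i$ ranges over $(0,\infty)$ while $x_i$ ranges over $\mathbb{R}$, this exactly absorbs the $1/(t_1 t_2)$ factor and turns $\widetilde{t_i}$ into $\widetilde{x_i} = (x_i-\mu_i)/\sigma_i$. Thus
\begin{align*}
Z_{g_c}
=
\int_{-\infty}^{\infty}\!\!\int_{-\infty}^{\infty}
\frac{1}{\sigma_1\sigma_2\sqrt{1-\rho^2}}\,
g_c\!\left(
\frac{\widetilde{x_1}^2 - 2\rho\,\widetilde{x_1}\widetilde{x_2} + \widetilde{x_2}^2}{1-\rho^2}
\right)
{\rm d}x_1\,{\rm d}x_2 .
\end{align*}

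Next I would introduce $u_i = \widetilde{x_i} = (x_i-\mu_i)/\sigma_i$, so that ${\rm d}x_i = \sigma_i\,{\rm d}u_i$ and the $\sigma_1\sigma_2$ factors cancel, leaving an integral that no longer depends on $\mu_1,\mu_2,\sigma_1,\sigma_2$. To remove the remaining $\rho$-dependence I would diagonalize the quadratic form $Q(u_1,u_2) = (u_1^2 - 2\rho u_1 u_2 + u_2^2)/(1-\rho^2)$. The cleanest way is a linear substitution $(u_1,u_2)\mapsto(z_1,z_2)$ whose Jacobian is $\sqrt{1-\rho^2}$ and which sends $Q$ to $z_1^2+z_2^2$; for instance one may take $z_1 = (u_1-\rho u_2)/\sqrt{1-\rho^2}$ and $z_2 = u_2$, for which the inverse map has Jacobian determinant $\sqrt{1-\rho^2}$, cancelling the last factor $\sqrt{1-\rho^2}$ in the denominator. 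Equivalently, one observes that the matrix of the quadratic form is $\boldsymbol{\Sigma}^{-1}$ scaled appropriately and that $\det(\boldsymbol{\Sigma}) = \sigma_1^2\sigma_2^2(1-\rho^2)$, so the standard Gaussian-type substitution decorrelating an elliptical density applies verbatim. This yields
\begin{align*}
Z_{g_c}
=
\int_{-\infty}^{\infty}\!\!\int_{-\infty}^{\infty}
g_c\big(z_1^2 + z_2^2\big)\,{\rm d}z_1\,{\rm d}z_2 ,
\end{align*}
which is the first claimed equality and makes the independence of $\boldsymbol{\theta}$ manifest.

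Finally I would switch to polar coordinates $z_1 = r\cos\varphi$, $z_2 = r\sin\varphi$ with ${\rm d}z_1\,{\rm d}z_2 = r\,{\rm d}r\,{\rm d}\varphi$, so that
\begin{align*}
Z_{g_c}
=
\int_0^{2\pi}\!\!\int_0^\infty g_c(r^2)\, r\,{\rm d}r\,{\rm d}\varphi
=
2\pi \int_0^\infty g_c(r^2)\, r\,{\rm d}r ,
\end{align*}
and then substitute $u = r^2$, ${\rm d}u = 2r\,{\rm d}r$, to get $Z_{g_c} = \pi\int_0^\infty g_c(u)\,{\rm d}u$, the second claimed equality.

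The argument is essentially a chain of substitutions, so there is no deep obstacle; the one point requiring a little care is verifying that the linear change of variables decorrelating the quadratic form has Jacobian exactly $\sqrt{1-\rho^2}$, so that it precisely cancels the normalizing factor $\sigma_1\sigma_2\sqrt{1-\rho^2}$ and no stray constants survive. I would also note in passing that all manipulations are justified by Tonelli's theorem when $g_c\geqslant 0$ (which is the case of interest, $g_c$ being a density generator), so the interchange of the order of integration and the substitutions are legitimate regardless of finiteness of $Z_{g_c}$; if $Z_{g_c}=\infty$ the identities still hold in $[0,\infty]$, and finiteness is exactly the condition under which \eqref{PDF} defines a genuine density.
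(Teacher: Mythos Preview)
Your proof is correct and follows essentially the same route as the paper: a change of variables that composes the logarithmic substitution with a standardization and a decorrelating linear map (the paper writes this as the single substitution $z_1=\widetilde{t}_1$, $z_2=(\widetilde{t}_2-\rho\widetilde{t}_1)/\sqrt{1-\rho^2}$, which is just the mirror image of your choice), followed by polar coordinates and the substitution $u=r^2$. Your added remarks on the Jacobian check and on Tonelli's theorem are welcome but do not represent a different method.
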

\begin{proof}
	The proof of the first identity follows by considering in \eqref{partition function} the following change of variables through the Jacobian method:
	\begin{align*}
	z_1=\widetilde{t}_1,
	\quad 
	z_2={\widetilde{t}_2-\rho \widetilde{t}_1\over\sqrt{1-\rho^2}},
	\end{align*} 
	where $\widetilde{t}_i$, $i=1,2$, are as defined in \eqref{PDF}.
	
	The proof of the second identity follows by using
	integration in polar coordinates: $z_1=r\cos(\theta)$, $z_2=r\sin(\theta)$, with $r\geqslant  0$ and $0\leqslant\theta\leqslant 2\pi$, and then the change of variables $u=r^2$, ${\rm d}u=2r {\rm d}r$.  As this is easy to establish, we omit its proof here.
\end{proof}

%
%

\subsection{Stochastic representation}

\begin{proposition}\label{Stochastic Representation}
The random vector $\boldsymbol{T}=(T_1,T_2)$ has a BLS distribution if
\begin{align*}
&T_1=\eta_1 \exp(\sigma_1 Z_1),
\\[0,2cm]
&T_2=\eta_2 
\exp\big(\sigma_2 {\rho} Z_1+\sigma_2\sqrt{1-\rho^2} Z_2\big),
\end{align*} 
where $Z_1=RDU_1$ and $Z_2=R\sqrt{1-D^2}U_2$;
$U_1$, $U_2$,$R$, and $D$ are mutually independent random variables, $\rho\in(-1,1)$, $\eta_i=\exp(\mu_i)$, and $\mathbb{P}(U_i = -1) = \mathbb{P}(U_i = 1) = 1/2$, $i=1,2$. The random variable $D$ is positive and has its PDF as
\begin{align*}
f_D(d)={2\over \pi\sqrt{1-d^2}}, \quad d\in(0,1).
\end{align*}
Moreover, the positive random variable $R$ is called the generator of the elliptical random vector $\boldsymbol{X}=(X_1,X_2)$.
{In other words, $R$ has is PDF as
\begin{align*}
	f_R(r)={2r g_c(r^2)\over \int_{0}^{\infty}
		g_c(u)
		\, {\rm d}{u}}, \quad r>0.
\end{align*}
}
\end{proposition}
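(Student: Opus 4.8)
The plan is to verify that the stochastic representation reproduces exactly the joint PDF in \eqref{PDF}. First I would work at the level of the elliptical vector $\boldsymbol{X}=(X_1,X_2)$ with $X_i=\log(T_i)$, since the definition of the BLS model in \eqref{PDF} already says that $\boldsymbol{X}\sim{\rm BES}(\boldsymbol{\theta}_*,g_c)$ with PDF \eqref{PDF-symmetric}, and the transformation $T_i=\exp(X_i)$ contributes precisely the Jacobian factor $1/(t_1t_2)$. So it suffices to show that the pair $(Z_1,Z_2)=(RDU_1,\,R\sqrt{1-D^2}\,U_2)$ has the standardized bivariate spherical density proportional to $g_c(z_1^2+z_2^2)$, and then set $\widetilde{X}_1=Z_1$, $\widetilde{X}_2=\rho Z_1+\sqrt{1-\rho^2}\,Z_2$, i.e.\ $X_1=\mu_1+\sigma_1 Z_1$ and $X_2=\mu_2+\sigma_2(\rho Z_1+\sqrt{1-\rho^2}\,Z_2)$, which matches the displayed formulas for $T_1,T_2$ after exponentiation.

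Next I would compute the density of $(Z_1,Z_2)$ from the four independent building blocks $R,D,U_1,U_2$. The key observation is that $D$ with density $f_D(d)=2/(\pi\sqrt{1-d^2})$ on $(0,1)$ together with the signs $U_1,U_2$ encodes a uniform direction: if one writes $D=|\cos\Theta|$ (equivalently $\sqrt{1-D^2}=|\sin\Theta|$) for $\Theta$ uniform on $[0,2\pi)$, the signs $U_1,U_2$ restore the missing quadrant information, so that $(DU_1,\sqrt{1-D^2}\,U_2)$ is uniformly distributed on the unit circle. Hence $(Z_1,Z_2)=R\cdot(\text{uniform unit vector})$, which is the standard polar decomposition of a spherically symmetric vector with radial part $R$. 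Using the stated density $f_R(r)=2r\,g_c(r^2)\big/\!\int_0^\infty g_c(u)\,{\rm d}u$ and the change of variables $(r,\theta)\mapsto(z_1,z_2)$ with Jacobian $r$, the density of $(Z_1,Z_2)$ comes out as $g_c(z_1^2+z_2^2)\big/\big(\pi\int_0^\infty g_c(u)\,{\rm d}u\big)$, and by Proposition \ref{partition function-simplicado} the denominator is exactly $Z_{g_c}$. Alternatively, and perhaps more cleanly, I would verify the claim about $(DU_1,\sqrt{1-D^2}\,U_2)$ by a direct Jacobian computation: transform $(D,U_1,U_2)\mapsto(W_1,W_2)=(DU_1,\sqrt{1-D^2}\,U_2)$ on each of the four sign-sectors and check the resulting density is the arclength (uniform) measure on $\{w_1^2+w_2^2=1\}$.

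Finally I would assemble the pieces: starting from the density of $(Z_1,Z_2)$, apply the linear map $(Z_1,Z_2)\mapsto(\widetilde{X}_1,\widetilde{X}_2)=(Z_1,\rho Z_1+\sqrt{1-\rho^2}Z_2)$, whose Jacobian determinant is $\sqrt{1-\rho^2}$, to obtain the density of $(\widetilde{X}_1,\widetilde{X}_2)$; note that $Z_1^2+Z_2^2 = \widetilde{X}_1^2 + \big((\widetilde{X}_2-\rho\widetilde{X}_1)/\sqrt{1-\rho^2}\big)^2 = (\widetilde{X}_1^2 - 2\rho\widetilde{X}_1\widetilde{X}_2 + \widetilde{X}_2^2)/(1-\rho^2)$, which is the quadratic form appearing in \eqref{PDF-symmetric}. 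Rescaling by $\sigma_i$ and shifting by $\mu_i$ introduces the factor $1/(\sigma_1\sigma_2)$ and turns $\widetilde{X}_i$ into $(x_i-\mu_i)/\sigma_i$, recovering \eqref{PDF-symmetric}; then $T_i=\exp(X_i)$ supplies the extra $1/(t_1t_2)$ and the substitution $\widetilde{t_i}=\log[(t_i/\eta_i)^{1/\sigma_i}]=(\log t_i-\mu_i)/\sigma_i$, giving \eqref{PDF} exactly. The only genuinely delicate step is the identification of $(DU_1,\sqrt{1-D^2}\,U_2)$ with the uniform law on the circle — in particular getting the factor $2/\pi$ in $f_D$ to match the normalization and correctly handling the absolute values/sign sectors so that all four quadrants are covered with equal weight; everything else is routine Jacobian bookkeeping.
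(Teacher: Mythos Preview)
Your argument is correct, but it takes a genuinely different route from the paper. The paper's proof is a two-line affair: it simply cites Abdous et al.\ (2005) for the fact that a BES vector $\boldsymbol{X}$ admits the representation $X_1=\mu_1+\sigma_1 Z_1$, $X_2=\mu_2+\sigma_2\rho Z_1+\sigma_2\sqrt{1-\rho^2}\,Z_2$ with $(Z_1,Z_2)$ built from $R,D,U_1,U_2$ as stated, and then observes that $T_i=\exp(X_i)$ finishes the job. All of the substance --- that $(DU_1,\sqrt{1-D^2}\,U_2)$ is uniform on the circle, that $R$ with the stated density yields the spherical law $g_c(z_1^2+z_2^2)/Z_{g_c}$, and that the linear map introduces the correct quadratic form --- is outsourced to the reference.

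You instead derive everything from scratch: the identification of $D$ with $|\cos\Theta|$ for $\Theta$ uniform, the quadrant-restoration via the independent signs, the polar-to-Cartesian Jacobian, and the explicit tracking of the linear map and exponentiation. This is more work but buys self-containment; in particular you actually \emph{prove} the joint density of $(Z_1,Z_2)$ that the paper only records later (Proposition~\ref{joint and marginal pdfs}) by citing yet another paper. The paper's approach is terser and appropriate if the reader is assumed to know the elliptical representation already; yours would be the right choice in a more expository setting or if one wanted the argument to stand on its own.
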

\begin{proof}
It is well-known that \cite[see][]{Abdous2005} the vector $\boldsymbol{X}$ has a BES distribution if
\begin{align}\label{rep-stoch-biv-gaussian}
\begin{array}{llll}
&X_1=\mu_1+\sigma_1 Z_1,
\\[0,2cm]
&X_2=\mu_2+\sigma_2 {\rho} Z_1+\sigma_2\sqrt{1-\rho^2} Z_2.
\end{array}
\end{align}
As $X_i=\log(T_i)$, $i=1,2$, the proof readily follows.
\end{proof}

\subsection{Marginal quantiles}

Let $\boldsymbol{T}=(T_1,T_2)\sim {\rm BLS}(\boldsymbol{\theta},g_c)$ and $p\in(0,1)$.
By using the stochastic representation of Proposition \ref{Stochastic Representation}, we readily obtain
\begin{align*}
p=\mathbb{P}(T_1\leqslant Q_{T_1})
=
\mathbb{P}\big(\eta_1 \exp(\sigma_1 Z_1)\leqslant Q_{T_1}\big)
=
\mathbb{P}\left(Z_1\leqslant \log\biggl[\Bigl({Q_{T_1}\over \eta_1}\Bigl)^{1/\sigma_1}\biggl]\right)
\end{align*}
and 
\begin{align*}
p=\mathbb{P}(T_2\leqslant Q_{T_2})
&=
\mathbb{P}\big(
\eta_2 
\exp(\sigma_2 {\rho} Z_1+\sigma_2\sqrt{1-\rho^2} Z_2)
\leqslant Q_{T_2}\big) 
\\[0,2cm]
&=
\mathbb{P}\left(
{\rho} Z_1+\sqrt{1-\rho^2} Z_2
\leqslant \log\biggl[\Bigl({Q_{T_2}\over \eta_2}\Bigl)^{1/\sigma_2}\biggl]\right).
%
\end{align*}
Hence, the $p$-th quantile of $T_1$ and the $p$-th quantile of $T_2$ are given by
\begin{align*}
\log\biggl[\Bigl({Q_{T_1}\over \eta_1}\Bigl)^{1/\sigma_1}\biggl]
=
Q_{Z_1} 
\quad \Longleftrightarrow \quad 
Q_{T_1}=\eta_1 \exp(\sigma_1 Q_{Z_1})
\end{align*}
and
\begin{align*}
	\log\biggl[\Bigl({Q_{T_2}\over \eta_2}\Bigl)^{1/\sigma_2}\biggl]
	=
	Q_{{\rho} Z_1+\sqrt{1-\rho^2} Z_2}
	\quad \Longleftrightarrow \quad 
	Q_{T_2}=
	\eta_2 \exp(\sigma_2 Q_{{\rho} Z_1+\sqrt{1-\rho^2} Z_2}),
\end{align*}
respectively.

\subsection{Conditional distribution}

\begin{proposition}\label{joint and marginal pdfs}
	The joint PDF of $Z_1$ and $Z_2$, given  in Proposition \ref{Stochastic Representation}, is
	\begin{align*}
	f_{Z_1,Z_2}(z_1,z_2)={g_c(z_1^2+z_2^2)\over \pi \int_{0}^{\infty}
		g_c(u)
		\, {\rm d}{u}}, \quad -\infty<z_1,z_2<\infty.
	\end{align*}
	Moreover, the marginal PDFs of $Z_1$ and $Z_2$, denoted by $f_{Z{_1}}$ and $f_{Z{_2}}$, respectively, are
	\begin{align*}
	f_{Z{_1}}(z_1)=
	{{\displaystyle\int_{\vert z_1\vert}^{\infty}} {2 g_c(w^2)\over \sqrt{1-{z_1^2\over w^2}}}\, {\rm d}{w} 
		\over \pi \int_{0}^{\infty}
		g_c(u)
		\, {\rm d}{u}}
	\quad \text{and} \quad
		f_{Z{_2}}(z_2)=
	{{\displaystyle
		\int_{\vert z_2\vert }^{\infty}} {2 g_c(w^2)\over \sqrt{1-{z_2^2\over w^2}}}\, {\rm d}{w}
		\over \pi \int_{0}^{\infty}
		g_c(u)
		\, {\rm d}{u}}, \quad -\infty<z_1,z_2<\infty.
	\end{align*}
	
Then, $f_{Z_i}(z_i\,\vert\, Z_j=z_j)$ ($i\neq j$) and $f_{Z_i}(z_i)$, $i,j=1,2$, are both even functions.
\end{proposition}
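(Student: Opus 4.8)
The plan is to obtain the joint density by transforming the already-known BES density \eqref{PDF-symmetric} through the affine change of variables supplied by the stochastic representation \eqref{rep-stoch-biv-gaussian}, and then to obtain the marginals by integrating out one coordinate. First I would note that the map $(Z_1,Z_2)\mapsto(X_1,X_2)$ defined by \eqref{rep-stoch-biv-gaussian} is linear with constant Jacobian determinant $\sigma_1\sigma_2\sqrt{1-\rho^2}$, so that $f_{Z_1,Z_2}(z_1,z_2)=\sigma_1\sigma_2\sqrt{1-\rho^2}\,f_{X_1,X_2}(x_1,x_2;\boldsymbol{\theta}_*)$, where each $x_i$ is expressed in terms of $(z_1,z_2)$. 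Under this substitution one has $\widetilde{x_1}=z_1$ and $\widetilde{x_2}=\rho z_1+\sqrt{1-\rho^2}\,z_2$. The one algebraic computation required is to expand the quadratic form appearing inside $g_c$ in \eqref{PDF-symmetric} and verify that $\widetilde{x_1}^2-2\rho\widetilde{x_1}\widetilde{x_2}+\widetilde{x_2}^2=(1-\rho^2)(z_1^2+z_2^2)$, so that the argument of $g_c$ collapses to $z_1^2+z_2^2$. The factors of $\sigma_1\sigma_2\sqrt{1-\rho^2}$ then cancel, leaving $f_{Z_1,Z_2}(z_1,z_2)=g_c(z_1^2+z_2^2)/Z_{g_c}$, and invoking Proposition \ref{partition function-simplicado} to write $Z_{g_c}=\pi\int_0^\infty g_c(u)\,\mathrm{d}u$ yields the stated joint PDF.

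For the marginals I would integrate the joint density over the second coordinate, giving $f_{Z_1}(z_1)=\big(\pi\int_0^\infty g_c(u)\,\mathrm{d}u\big)^{-1}\int_{-\infty}^\infty g_c(z_1^2+z_2^2)\,\mathrm{d}z_2$. Since the integrand is even in $z_2$, I would fold the integral to $2\int_0^\infty$ and then apply the substitution $w=\sqrt{z_1^2+z_2^2}$, for which $z_2=\sqrt{w^2-z_1^2}$, the lower limit becomes $w=|z_1|$, and $\mathrm{d}z_2=(1-z_1^2/w^2)^{-1/2}\,\mathrm{d}w$. This transforms the numerator into $\int_{|z_1|}^\infty 2g_c(w^2)(1-z_1^2/w^2)^{-1/2}\,\mathrm{d}w$, which is exactly the claimed expression; the formula for $f_{Z_2}$ follows by the identical argument with the roles of the two coordinates interchanged.

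Finally, the evenness claims are immediate consequences of the explicit formulas. Each marginal density depends on its argument only through $z_i^2$ in the integrand and $|z_i|$ in the lower limit, hence $f_{Z_i}(-z_i)=f_{Z_i}(z_i)$. For the conditional density I would write $f_{Z_i}(z_i\mid Z_j=z_j)=f_{Z_1,Z_2}(z_1,z_2)/f_{Z_j}(z_j)$; the numerator depends on $z_i$ only through $z_i^2$ while the denominator is free of $z_i$, so the quotient is even in $z_i$ as well. The only nonroutine step is the quadratic-form identity in the first paragraph, and even that is a short direct expansion; the remaining manipulations are standard Jacobian and change-of-variable calculations.
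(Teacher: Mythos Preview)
Your argument is correct in every step: the Jacobian computation, the quadratic-form identity $\widetilde{x_1}^2-2\rho\widetilde{x_1}\widetilde{x_2}+\widetilde{x_2}^2=(1-\rho^2)(z_1^2+z_2^2)$, the substitution $w=\sqrt{z_1^2+z_2^2}$ for the marginals, and the evenness observations all go through exactly as you describe. In fact you have done more than the paper does here. The paper's own ``proof'' of this proposition consists of a single sentence referring the reader to Propositions~3.1 and~3.2 of \cite{Saulo2022}; no argument is given in the present manuscript. Your approach is essentially the natural one (and presumably close to what appears in the cited reference), but it has the advantage of being self-contained: it relies only on the BES density \eqref{PDF-symmetric}, the stochastic representation \eqref{rep-stoch-biv-gaussian}, and Proposition~\ref{partition function-simplicado}, all of which are already established in this paper.
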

\begin{proof}
For more details on the proof, see Propositions 3.1 and 3.2 of reference \cite{Saulo2022}.
\end{proof}

Let $X$ and $Y$ be two continuous random variables with joint PDF $f_{X,Y}$, and marginal PDFs $f_X$ and $f_Y$, respectively. 
\begin{itemize}
\item
Let $B$ be a Borel subset of $\mathbb{R}$.
The conditional CDF of $X$, given $\{Y\in B\}$, denoted by ${F}_X(x\,\vert\, Y\in B)$,  is defined as (for every $x$)
\begin{align}\label{cdf-cond}
{F}_X(x\,\vert\, Y\in B)=\mathbb{P}(X\leqslant x\,\vert\, Y\in B)
=
\int_{-\infty}^x {f}_X(u\,\vert\, Y\in B)\, {\rm d} u \quad \text{if} \ \mathbb{P}(Y\in B)>0,
%
\end{align}
where ${f}_X(u\,\vert\, Y\in B)$ is the corresponding conditional PDF given by
\begin{align*}
{f}_X(u\,\vert\, Y\in B)=\frac{\int_B {f}_{X,Y}(u,v)\, {\rm d} v}{\mathbb{P}(Y\in B)}.
\end{align*}
We use $X\,\vert\, Y\in B$ to denote the random variable $X$ having the conditional CDF in \eqref{cdf-cond}, given $\{Y\in B\}$;

\item
Let $\varepsilon>0$, and further $\mathbb{P}(y-\varepsilon<Y\leqslant y+\varepsilon)>0$. Then, we define the conditional CDF of $X$, given $Y=y$, denoted by ${F}_X(x\,\vert\, Y=y)$,  as (for every $x$)
\begin{align*}
{F}_X(x\vert Y=y)
=
\lim_{\varepsilon\to 0^+} \mathbb{P}(X\leqslant x\,\vert\, y-\varepsilon<Y\leqslant y+\varepsilon),
\end{align*}
provided that the limit exists.
If the limit exists, there is a nonnegative function ${f}_X(u\vert Y=y)$ (called the conditional PDF) so that (for every $x$)
\begin{align*}
	{F}_X(x\,\vert\, Y=y)
	=
	\int_{-\infty}^x {f}_X(u\,\vert\, Y=y)\, {\rm d} u.
\end{align*}
At every point $(x,y)$ at which $f_{X,Y}$ is continuous and $f_Y(y)>0$ is continuous, the PDF ${f}_X(u\,\vert\, Y=y)$ exists and is given by (see Theorem 6, p. 109, of \cite{Rohatgi2015})
\begin{align*}
{f}_X(u\,\vert\, Y=y)={f_{X,Y}(u,y)\over f_Y(y)}.
\end{align*}
For simplicity, we shall denote it by $X\,\vert\, Y=y$.
\end{itemize}

\begin{lemma}\label{conditional PDF}
	If $\boldsymbol{T}=(T_1,T_2)\sim {\rm BLS}(\boldsymbol{\theta},g_c)$, then
the PDF of $T_2 \vert T_1=t_1$ is given by
\begin{align}\label{cond-pdf}
f_{T_2}(t_2\,\vert\, T_1=t_1)
=
{1\over t_2\sigma_2 \sqrt{1-\rho^2}}\,
{
	f_{Z_2}\biggl(
	{1\over\sqrt{1-\rho^2}}\,\widetilde{t}_2
	-
	{\rho\over\sqrt{1-\rho^2}}\, \widetilde{t}_1 \, \bigg\vert\, Z_1=\widetilde{t}_1 
	\biggr)
},
\end{align}
where $\widetilde{t}_i$, $i=1,2$, are as defined in \eqref{PDF}, and  ${Z_1}$ and ${Z_2}$  are as in Proposition \ref{Stochastic Representation}.
\end{lemma}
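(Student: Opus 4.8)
\noindent\emph{Proof sketch.} The plan is to obtain the conditional density as the pointwise ratio $f_{T_2}(t_2\mid T_1=t_1)=f_{T_1,T_2}(t_1,t_2)/f_{T_1}(t_1)$, recognizing the numerator and the denominator as transported versions of the densities of $(Z_1,Z_2)$ and of $Z_1$ supplied by Proposition \ref{joint and marginal pdfs}.

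First I would rewrite the joint density \eqref{PDF}. With the substitution $z_1=\widetilde{t}_1$ and $z_2=(\widetilde{t}_2-\rho\,\widetilde{t}_1)/\sqrt{1-\rho^2}$ a one-line computation gives
\begin{align*}
z_1^2+z_2^2=\frac{\widetilde{t}_1^{\,2}-2\rho\,\widetilde{t}_1\widetilde{t}_2+\widetilde{t}_2^{\,2}}{1-\rho^2},
\end{align*}
and, since $Z_{g_c}=\pi\int_0^\infty g_c(u)\,{\rm d}u$ by Proposition \ref{partition function-simplicado}, comparison with the formula for $f_{Z_1,Z_2}$ in Proposition \ref{joint and marginal pdfs} yields
\begin{align*}
f_{T_1,T_2}(t_1,t_2)=\frac{1}{t_1t_2\sigma_1\sigma_2\sqrt{1-\rho^2}}\,f_{Z_1,Z_2}\!\left(\widetilde{t}_1,\ \frac{\widetilde{t}_2-\rho\,\widetilde{t}_1}{\sqrt{1-\rho^2}}\right).
\end{align*}

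Next I would compute the marginal $f_{T_1}$. By the stochastic representation in Proposition \ref{Stochastic Representation}, $T_1=\eta_1\exp(\sigma_1 Z_1)$ is a strictly increasing function of $Z_1$ whose inverse is $Z_1=\widetilde{t}_1=\log[(t_1/\eta_1)^{1/\sigma_1}]$, with derivative $1/(t_1\sigma_1)$ in $t_1$; the univariate change-of-variables formula then gives $f_{T_1}(t_1)=f_{Z_1}(\widetilde{t}_1)/(t_1\sigma_1)$, with $f_{Z_1}$ as in Proposition \ref{joint and marginal pdfs}. Dividing the two displays, the factors $1/(t_1\sigma_1)$ cancel and, invoking the definition of conditional PDF recalled just before the statement (valid wherever $g_c$ is continuous and $f_{T_1}(t_1)>0$), one obtains
\begin{align*}
f_{T_2}(t_2\mid T_1=t_1)
=\frac{1}{t_2\sigma_2\sqrt{1-\rho^2}}\cdot\frac{f_{Z_1,Z_2}\!\left(\widetilde{t}_1,\ \dfrac{\widetilde{t}_2-\rho\,\widetilde{t}_1}{\sqrt{1-\rho^2}}\right)}{f_{Z_1}(\widetilde{t}_1)}
=\frac{1}{t_2\sigma_2\sqrt{1-\rho^2}}\,f_{Z_2}\!\left(\frac{\widetilde{t}_2-\rho\,\widetilde{t}_1}{\sqrt{1-\rho^2}}\ \bigg|\ Z_1=\widetilde{t}_1\right),
\end{align*}
which is exactly \eqref{cond-pdf} once we write $(\widetilde{t}_2-\rho\,\widetilde{t}_1)/\sqrt{1-\rho^2}=\widetilde{t}_2/\sqrt{1-\rho^2}-\rho\,\widetilde{t}_1/\sqrt{1-\rho^2}$.

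A word on where the care lies: none of the steps is deep, so the main obstacle is the bookkeeping of the Jacobian factors together with the justification that the conditional density genuinely equals the pointwise ratio $f_{T_1,T_2}/f_{T_1}$ --- this is precisely where the continuity of $g_c$ (inherited from the assumptions on the density generator) and the positivity of $f_{T_1}(t_1)$ enter, via the conditional-PDF theorem quoted before the lemma. An equivalent, purely transformational route would be to condition first on $X_1=\log t_1$ in the ${\rm BES}$ representation \eqref{rep-stoch-biv-gaussian}: this pins $Z_1=\widetilde{t}_1$ and leaves $X_2$ an affine function of $Z_2\mid Z_1=\widetilde{t}_1$ with slope $\sigma_2\sqrt{1-\rho^2}$, after which pushing forward through $t_2=\exp(x_2)$ produces the two Jacobians $1/(\sigma_2\sqrt{1-\rho^2})$ and $1/t_2$ that combine into the stated constant.
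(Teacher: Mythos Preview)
Your argument is correct. The main route you take --- forming the pointwise ratio $f_{T_1,T_2}/f_{T_1}$ after recognizing both pieces as pushforwards of the $(Z_1,Z_2)$ densities from Proposition~\ref{joint and marginal pdfs} --- differs from what the paper does. The paper instead works through the stochastic representation: observing that $T_1=t_1$ forces $Z_1=\widetilde{t}_1$, it writes the conditional law of $T_2$ as that of $\eta_2\exp\big(\sigma_2\rho\,\widetilde{t}_1+\sigma_2\sqrt{1-\rho^2}\,Z_2\big)$ given $Z_1=\widetilde{t}_1$, computes the conditional CDF $F_{T_2}(t_2\mid T_1=t_1)=\mathbb{P}\big(Z_2\le (\widetilde{t}_2-\rho\,\widetilde{t}_1)/\sqrt{1-\rho^2}\mid Z_1=\widetilde{t}_1\big)$, and then differentiates. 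This is essentially the ``purely transformational route'' you sketch in your closing paragraph. Your density-ratio approach is arguably more direct and makes the Jacobian bookkeeping explicit, at the price of needing the pointwise conditional-PDF theorem you flag; the paper's CDF approach sidesteps that by working at the level of probabilities, but the differentiation step that recovers the PDF ultimately relies on the same regularity. Neither gains anything substantial over the other here.
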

\begin{proof}
If $T_1=t_1$, then $Z_1 = \log\big[(t_1/\eta_1)^{1/\sigma_1}\big]=\widetilde{t}_1$. Thus, the conditional distribution of $T_2$, given $T_1=t_1$, is the same as the distribution of
\begin{align*}
	\eta_2 
	\exp\big(\sigma_2 {\rho} \widetilde{t}_1+\sigma_2\sqrt{1-\rho^2} Z_2\big)\, \big\vert\, T_1=t_1.
\end{align*}
Consequently, 
\begin{align*}
F_{T_2}(t_2\, \vert\, T_1=t_1)
&=
\mathbb{P}\big(\eta_2 
\exp(\sigma_2 {\rho} \widetilde{t}_1+\sigma_2\sqrt{1-\rho^2} Z_2)\leqslant t_2\, \big\vert\, T_1=t_1\big)
\\[0,2cm]
&=
\mathbb{P}\biggl(
Z_2\leqslant {1\over\sqrt{1-\rho^2}}\, \widetilde{t}_2-{{\rho}\over\sqrt{1-\rho^2}}\, \widetilde{t}_1\, \bigg\vert\, Z_1= \widetilde{t}_1\biggr).
\end{align*}
Then, the conditional PDF of $T_2$, given  $T_1=t_1$, follows readily.
\end{proof}

\begin{theorem}\label{theo-pdf-cond}
For a Borel subset $B$ of $(0,\infty)$, we define the following Borel set:
\begin{align}\label{B-child}
{B}_\rho
=
{1\over\sqrt{1-\rho^2}}\,
\log\biggl[\Bigl({B\over \eta_2}\Bigr)^{1/\sigma_2}\biggr]
-
{\rho\over\sqrt{1-\rho^2}}\, \widetilde{t_1},
\end{align}
where $\widetilde{t}_1$ is as in \eqref{PDF}.
If $\boldsymbol{T}=(T_1,T_2)\sim {\rm BLS}(\boldsymbol{\theta},g_c)$, then the PDF of $T_1\,\vert\, T_2\in B$ is given by
\begin{align*}
		f_{T_1}(t_1\vert T_2\in B)
		=
{1\over t_1 \sigma_1}\,f_{Z_1}(\widetilde{t}_1)\, 
{
	\int_{{B}_\rho}
	f_{Z_2}(
	w \, \vert\, Z_1=\widetilde{t}_1 )\, 
	{\rm d}w
	\over 
	\mathbb{P}(\rho Z_1+\sqrt{1-\rho^2}Z_2\in {B}_0)
},
\end{align*}
with ${Z_1}$ and ${Z_2}$ as in Proposition \ref{Stochastic Representation}.
\end{theorem}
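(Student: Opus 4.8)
The plan is to compute the conditional density of $T_1$ given $\{T_2\in B\}$ directly from the definition of conditional PDF recalled before the statement, namely
\[
f_{T_1}(t_1\mid T_2\in B)=\frac{\int_B f_{T_1,T_2}(t_1,t_2)\,{\rm d}t_2}{\mathbb{P}(T_2\in B)},
\]
and then to translate each piece into statements about $Z_1,Z_2$ using the stochastic representation of Proposition \ref{Stochastic Representation}. First I would rewrite the numerator by factoring the joint BLS density \eqref{PDF} as $f_{T_1,T_2}(t_1,t_2)=f_{T_1}(t_1)\,f_{T_2}(t_2\mid T_1=t_1)$; by the marginal quantile computations (or directly from Proposition \ref{joint and marginal pdfs} composed with the change $t_1\mapsto\widetilde t_1$) one has $f_{T_1}(t_1)=\frac{1}{t_1\sigma_1}f_{Z_1}(\widetilde t_1)$, and Lemma \ref{conditional PDF} gives the explicit form of $f_{T_2}(t_2\mid T_1=t_1)$ in terms of $f_{Z_2}(\cdot\mid Z_1=\widetilde t_1)$.

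Next I would carry out the substitution that turns the $t_2$-integral over $B$ into a $w$-integral over $B_\rho$. Concretely, in $\int_B f_{T_2}(t_2\mid T_1=t_1)\,{\rm d}t_2$ I would set $w=\frac{1}{\sqrt{1-\rho^2}}\widetilde t_2-\frac{\rho}{\sqrt{1-\rho^2}}\widetilde t_1$, where $\widetilde t_2=\log[(t_2/\eta_2)^{1/\sigma_2}]$; the Jacobian factor ${\rm d}t_2 = t_2\sigma_2\sqrt{1-\rho^2}\,{\rm d}w$ cancels exactly the prefactor $\frac{1}{t_2\sigma_2\sqrt{1-\rho^2}}$ appearing in \eqref{cond-pdf}, and the image of $B$ under this map is precisely the set $B_\rho$ defined in \eqref{B-child}. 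This yields
\[
\int_B f_{T_1,T_2}(t_1,t_2)\,{\rm d}t_2
=\frac{1}{t_1\sigma_1}\,f_{Z_1}(\widetilde t_1)\int_{B_\rho} f_{Z_2}(w\mid Z_1=\widetilde t_1)\,{\rm d}w.
\]

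For the denominator I would use the stochastic representation again: $T_2=\eta_2\exp(\sigma_2\rho Z_1+\sigma_2\sqrt{1-\rho^2}Z_2)$, so $\{T_2\in B\}=\{\rho Z_1+\sqrt{1-\rho^2}Z_2\in\frac{1}{\sigma_2}\log[(B/\eta_2)]\}$. Writing this threshold set as $B_0$ (the $\rho=0$ specialization of the map in \eqref{B-child}, with the $\rho$-dependent factors set accordingly) gives $\mathbb{P}(T_2\in B)=\mathbb{P}(\rho Z_1+\sqrt{1-\rho^2}Z_2\in B_0)$. Dividing the numerator expression by this probability produces exactly the claimed formula. The only genuinely delicate point is bookkeeping the affine change of variables and confirming that the image set is $B_\rho$ as written (and likewise that the denominator set is $B_0$) — in particular making sure the monotone map $t_2\mapsto w$ is orientation-preserving so no sign issue arises, and that $\mathbb{P}(T_2\in B)>0$ is exactly the hypothesis needed to divide; everything else is a direct substitution with no analytic subtleties.
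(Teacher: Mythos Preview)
Your proposal is correct and follows essentially the same route as the paper's proof: start from the definition of the conditional density, factor through $f_{T_1}(t_1)=\frac{1}{t_1\sigma_1}f_{Z_1}(\widetilde t_1)$ and Lemma~\ref{conditional PDF}, perform the affine change of variables $w=(\widetilde t_2-\rho\widetilde t_1)/\sqrt{1-\rho^2}$ to turn $B$ into $B_\rho$, and identify the denominator via the stochastic representation as $\mathbb{P}(\rho Z_1+\sqrt{1-\rho^2}Z_2\in B_0)$. The only cosmetic difference is that the paper writes the starting identity in the Bayes-type form $f_{T_1}(t_1)\int_B f_{T_2}(t_2\mid T_1=t_1)\,{\rm d}t_2/\mathbb{P}(T_2\in B)$ rather than $\int_B f_{T_1,T_2}(t_1,t_2)\,{\rm d}t_2/\mathbb{P}(T_2\in B)$, which is of course the same thing.
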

\begin{proof}
Let $B$ be a Borel subset of $(0,\infty)$. Note that
\begin{align*}
	f_{T_1}(t_1\,\vert\, T_2\in B)
	=
	f_{T_1}(t_1)\, {\int_B f_{T_2}(t_2\,\vert\, T_1=t_1)\, {\rm d}t_2\over \mathbb{P}(T_2\in B)}.
\end{align*}
As $f_{T_1}(t_1)=f_{Z_1}(\widetilde{t}_1)/(\sigma_1 t_1)$ and $\mathbb{P}(T_2\in B)=\mathbb{P}(\rho Z_1+\sqrt{1-\rho^2}Z_2\in {B}_0)$, where ${B}_0$ is as given in \eqref{B-child} with $\rho=0$, the term on the right-hand side of the above identity equals
\begin{align*}
	{1\over \sigma_1 t_1}\,f_{Z_1}(\widetilde{t}_1)\, {\int_B f_{T_2}(t_2\,\vert\, T_1=t_1)\, {\rm d}t_2\over \mathbb{P}(\rho Z_1+\sqrt{1-\rho^2}Z_2\in {B}_0)}.
\end{align*}
By using the expression for $f_{T_2}(t_2\,\vert\, T_1=t_1)$ provided in Lemma \ref{conditional PDF}, the above expression equals
\begin{align*}
{1\over t_1 \sigma_1\sigma_2 \sqrt{1-\rho^2}}\,f_{Z_1}(\widetilde{t}_1)\, 
{\int_B 
	{1\over t_2}\,	
	f_{Z_2}\Big(
	{1\over\sqrt{1-\rho^2}}\,\widetilde{t}_2
	-
	{\rho\over\sqrt{1-\rho^2}}\, \widetilde{t}_1 \, \Big\vert\, Z_1=\widetilde{t}_1 
	\Big)\, 
	{\rm d}t_2\over \mathbb{P}(\rho Z_1+\sqrt{1-\rho^2}Z_2\in {B}_0)},
\end{align*}
where $\widetilde{t}_i$, $i=1,2$, are as in \eqref{PDF}.
Finally, by applying the change of variable $w=(\widetilde{t}_2
-
{\rho}\, \widetilde{t}_1)/\sqrt{1-\rho^2}$, the above expression reduces to
\begin{align*}
{1 \over t_1 \sigma_1}\,f_{Z_1}(\widetilde{t}_1)\, 
{\int_{{B}_\rho}
	f_{Z_2}(
	w \, \vert\, Z_1=\widetilde{t}_1 )\, 
	{\rm d}w
	\over \mathbb{P}(\rho Z_1+\sqrt{1-\rho^2}Z_2\in {B}_0)},
\end{align*}
in which ${B}_\rho$ is as in \eqref{B-child}.
This completes the proof of the theorem.
\end{proof}

\begin{corollary}[Gaussian generator] \label{Gaussian generator}
Let $\boldsymbol{T}=(T_1,T_2)\sim {\rm BLS}(\boldsymbol{\theta},g_c)$ and $g_c(x)=\exp(-x/2)$ be the generator of the bivariate log-normal distribution. Then, for each Borel subset $B$ of $(0,\infty)$, the PDF of $T_1\,\vert\, T_2\in B$ is given by (for $t_1>0$)
\begin{align*}
f_{T_1}(t_1\vert T_2\in B)
=
{1\over t_1 \sigma_1}\,
\phi\Bigl(
\log\Bigl[\Big({t_1\over \eta_1}\Big)^{1/\sigma_1}\Bigr]
\Bigr)\, 
\dfrac{
	\Phi\Bigl(
{1\over\sqrt{1-\rho^2}}
\Big\{
\log\Bigl[\big({B\over \eta_2}\big)^{1/\sigma_2}\Bigr]
-
\rho \log\Big[\big({t_1\over \eta_1}\big)^{1/\sigma_1}\Big]
\Big\}
\Bigr)
}{
	\Phi\Big(\log\Big[\big(\frac{B}{ \eta_2}\big)^{1/\sigma_2}\Big]\Big)
},
\end{align*}
where we use the notation $\Phi(C)=\int_C \phi(x){\rm d}x$, with $\phi(x)=g_c(x^2)/\sqrt{2\pi}$.
\end{corollary}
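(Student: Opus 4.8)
The plan is to specialize Theorem \ref{theo-pdf-cond} to the Gaussian generator $g_c(x)=\exp(-x/2)$ and then identify the resulting $Z_1$, $Z_2$ objects explicitly. First I would recall from Proposition \ref{Stochastic Representation} (or from the classical representation of the bivariate normal) that when $g_c(x)=\exp(-x/2)$ the pair $(Z_1,Z_2)$ is exactly a pair of independent standard normal variables: indeed, by Proposition \ref{joint and marginal pdfs} the joint density is proportional to $\exp(-(z_1^2+z_2^2)/2)$, and by Proposition \ref{partition function-simplicado} the normalizing constant $\pi\int_0^\infty g_c(u)\,{\rm d}u = 2\pi$ gives precisely the standard bivariate normal density. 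Hence $f_{Z_1}=\phi$, $f_{Z_2}=\phi$, and because of independence $f_{Z_2}(w\mid Z_1=\widetilde{t}_1)=\phi(w)$.

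Next I would substitute these facts into the general formula of Theorem \ref{theo-pdf-cond}. The factor $f_{Z_1}(\widetilde{t}_1)$ becomes $\phi\big(\log[(t_1/\eta_1)^{1/\sigma_1}]\big)$. The numerator $\int_{B_\rho} f_{Z_2}(w\mid Z_1=\widetilde{t}_1)\,{\rm d}w$ becomes $\int_{B_\rho}\phi(w)\,{\rm d}w = \Phi(B_\rho)$ in the notation $\Phi(C)=\int_C\phi$, and plugging in the definition \eqref{B-child} of $B_\rho$ gives exactly the stated argument $\tfrac{1}{\sqrt{1-\rho^2}}\{\log[(B/\eta_2)^{1/\sigma_2}]-\rho\log[(t_1/\eta_1)^{1/\sigma_1}]\}$. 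For the denominator I would note that $\rho Z_1 + \sqrt{1-\rho^2}\,Z_2$ is, being a linear combination of independent standard normals with coefficients whose squares sum to one, again standard normal; therefore $\mathbb{P}(\rho Z_1+\sqrt{1-\rho^2}Z_2 \in B_0) = \Phi(B_0)$, and $B_0$ is $B_\rho$ with $\rho=0$, i.e.\ $\log[(B/\eta_2)^{1/\sigma_2}]$. Collecting these three substitutions yields the displayed expression, and I would finish by observing that $\phi(x)=g_c(x^2)/\sqrt{2\pi}$ is consistent with the generator, as noted in the statement.

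The only genuine point requiring care — the ``main obstacle,'' though it is mild — is justifying that for the Gaussian generator $Z_1$ and $Z_2$ are independent standard normals and that the affine combination $\rho Z_1+\sqrt{1-\rho^2}Z_2$ is itself standard normal. Both follow from Proposition \ref{joint and marginal pdfs} (the joint density factors into $\phi(z_1)\phi(z_2)$) together with the elementary fact that sums of independent Gaussians are Gaussian with variances adding; alternatively one can read it straight off the stochastic representation \eqref{rep-stoch-biv-gaussian} by recognizing that in the Gaussian case $(X_1,X_2)$ is bivariate normal with the standard marginal-then-conditional construction. Everything else is a direct transcription of Theorem \ref{theo-pdf-cond} with the explicit densities inserted, so no further work is needed.
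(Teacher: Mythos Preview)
Your proposal is correct and follows essentially the same route as the paper: both argue that in the Gaussian case $Z_1,Z_2$ are independent standard normals (hence $Z_2\mid Z_1=\widetilde{t}_1\sim N(0,1)$ and $\rho Z_1+\sqrt{1-\rho^2}Z_2\sim N(0,1)$) and then substitute directly into Theorem~\ref{theo-pdf-cond}. If anything, your justification of the independence via Propositions~\ref{joint and marginal pdfs} and~\ref{partition function-simplicado} is slightly more explicit than the paper's, which simply cites the well-known stochastic representation~\eqref{rep-stoch-biv-gaussian} of the bivariate normal.
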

\begin{proof}
It is well-known that the bivariate normal distribution admits a stochastic representation as in \eqref{rep-stoch-biv-gaussian},  where $Z_1\sim N(0,1)$ and $Z_2\sim N(0,1)$ are independent. Consequently, $Z_2\vert Z_1=z\sim N(0,1)$ and $\rho Z_1+\sqrt{1-\rho^2}Z_2\sim N(0,1)$.
Further, a simple algebraic manipulation shows that
\begin{align*}
	\int_{{B}_\rho}
	f_{Z_2}(
	w \, \vert\, Z_1=\widetilde{t}_1 )\, 
	{\rm d}w
	= 
	\Phi(B_\rho),
\end{align*}
where $B_\rho$ is the Borel set defined in \eqref{B-child}.
Then, by using Theorem \ref{theo-pdf-cond}, the required result follows.
\end{proof}

\begin{corollary}[Student-$t$ generator]
	Let $\boldsymbol{T}=(T_1,T_2)\sim {\rm BLS}(\boldsymbol{\theta},g_c)$ and $g_c(x)=(1+(x/\nu))^{-(\nu+2)/2}$, $\nu>0$, be the generator of the bivariate log-Student-$t$ distribution with $\nu$ degrees of freedom. Then, for each Borel subset $B$ of $(0,\infty)$, the PDF of $T_1\vert T_2\in B$ is given by (for $t_1>0$)
	\begin{align*}
\resizebox{17.5cm}{!}{
	$
	f_{T_1}(t_1\vert T_2\in B)
	=
	{1 \over  t_1 \sigma_1}\,
	f_{\nu}\Bigl(
	\log\Bigl[\big({t_1\over \eta_1}\big)^{1/\sigma_1}\Bigr]
	\Bigr)\, 
	\dfrac{
F_{\nu+1}\Big(\sqrt{\nu+1\over \nu+\widetilde{t_1}^2}\, {1\over\sqrt{1-\rho^2}}
\Big\{
\log\Bigl[\big({B\over \eta_2}\big)^{1/\sigma_2}\Bigr]
-
\rho \log\Big[\big({t_1\over \eta_1}\big)^{1/\sigma_1}\Big]
\Big\} 
\Big)		
}{
		F_\nu\Big(\log\Big[\big(\frac{B}{ \eta_2}\big)^{1/\sigma_2}\Big]\Big)
	},
$
}
	\end{align*}
	where  $F_\nu(C)=\int_C f_\nu(x){\rm d}x$, with $f_\nu(x)$ being the standard Student-$t$ PDF.
\end{corollary}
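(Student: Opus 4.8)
The plan is to specialize Theorem~\ref{theo-pdf-cond} to the Student-$t$ density generator $g_c(x)=(1+x/\nu)^{-(\nu+2)/2}$, exactly as was done for the Gaussian generator in Corollary~\ref{Gaussian generator}. The only inputs needed are explicit expressions for the three quantities appearing in the general formula: the marginal density $f_{Z_1}(\widetilde t_1)$, the conditional integral $\int_{B_\rho} f_{Z_2}(w\mid Z_1=\widetilde t_1)\,{\rm d}w$, and the normalizing probability $\mathbb{P}(\rho Z_1+\sqrt{1-\rho^2}Z_2\in B_0)$. All three are classical facts about the bivariate Student-$t$ distribution, so the work is really just assembling known pieces.

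First I would recall that if $(Z_1,Z_2)$ has the standard bivariate Student-$t$ distribution with $\nu$ degrees of freedom (correlation $0$, since Proposition~\ref{joint and marginal pdfs} gives $f_{Z_1,Z_2}(z_1,z_2)\propto g_c(z_1^2+z_2^2)$), then each marginal $Z_i$ is univariate standard Student-$t$ with $\nu$ degrees of freedom; this identifies $f_{Z_1}(\widetilde t_1)=f_\nu(\widetilde t_1)$ where $\widetilde t_1=\log[(t_1/\eta_1)^{1/\sigma_1}]$, and also gives $\mathbb{P}(\rho Z_1+\sqrt{1-\rho^2}Z_2\in B_0)=F_\nu(B_0)$ after noting that $\rho Z_1+\sqrt{1-\rho^2}Z_2$ is again standard $t_\nu$ (a scale mixture argument: conditionally on the mixing variable both coordinates are jointly Gaussian with the stated unit-variance, zero-correlation structure, so the linear combination is Gaussian with the same variance, and unmixing returns a $t_\nu$). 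Rewriting $B_0$ using \eqref{B-child} with $\rho=0$ yields the denominator $F_\nu\big(\log[(B/\eta_2)^{1/\sigma_2}]\big)$.

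The one genuinely non-trivial ingredient is the conditional law of $Z_2$ given $Z_1=z$ for the bivariate $t$: it is \emph{not} Student-$t$ with $\nu$ degrees of freedom, but rather a rescaled $t$ with $\nu+1$ degrees of freedom. Precisely, $Z_2\mid Z_1=z$ is distributed as $\sqrt{(\nu+z^2)/(\nu+1)}$ times a standard $t_{\nu+1}$ variable; this is the standard formula for conditioning in the multivariate $t$ family (see e.g. the scale-mixture representation, where conditioning on $Z_1=z$ updates the inverse-gamma mixing weight, shifting $\nu\mapsto\nu+1$ and rescaling by the factor $(\nu+z^2)/(\nu+1)$). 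Granting this, $\int_{B_\rho} f_{Z_2}(w\mid Z_1=\widetilde t_1)\,{\rm d}w = F_{\nu+1}\big(\sqrt{(\nu+1)/(\nu+\widetilde t_1^{\,2})}\,B_\rho\big)$, and substituting the explicit form of $B_\rho$ from \eqref{B-child} produces the numerator in the claimed display. This conditioning fact is the main obstacle in the sense that it is the only step that is not a one-line specialization; I would either cite it from a standard reference on elliptical/multivariate-$t$ distributions or verify it directly from $f_{Z_2}(w\mid Z_1=z)=f_{Z_1,Z_2}(z,w)/f_{Z_1}(z)\propto (1+(z^2+w^2)/\nu)^{-(\nu+2)/2}/(1+z^2/\nu)^{-(\nu+1)/2}$, recognizing the right-hand side as the density of the stated rescaled $t_{\nu+1}$.

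Finally I would plug $f_{Z_1}(\widetilde t_1)=f_\nu(\widetilde t_1)$, the numerator $F_{\nu+1}(\cdots)$, and the denominator $F_\nu(\cdots)$ into the formula of Theorem~\ref{theo-pdf-cond}, keeping the Jacobian factor $1/(t_1\sigma_1)$, and simplify $\widetilde t_1 = \log[(t_1/\eta_1)^{1/\sigma_1}]$ throughout; this yields exactly the displayed expression, completing the proof. No delicate estimates are involved — the content is entirely in correctly identifying the three distributional objects, with the $\nu\mapsto\nu+1$ shift in the conditional being the point to get right.
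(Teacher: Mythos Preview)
Your proposal is correct and follows essentially the same route as the paper: both apply Theorem~\ref{theo-pdf-cond} and identify the three ingredients via standard bivariate Student-$t$ facts, with the key point being the $\nu\mapsto\nu+1$ shift in the conditional $Z_2\mid Z_1=z$. The paper obtains that conditional by citing an external result (Remark~3.7 of \cite{Saulo2022}) rather than computing the density ratio directly, but the content is identical.
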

\begin{proof}
It is well-known that the bivariate Student-$t$ distribution has a stochastic representation as in \eqref{rep-stoch-biv-gaussian}, where $Z_1=Z_1^* \sqrt{\nu/Q}\sim t_\nu$ and $Z_2= Z_2^* \sqrt{\nu/Q}\sim t_\nu$, $Q\sim\chi^2_\nu$ (chi-square with $\nu$ degrees of freedom) is independent of $Z_1^*$ and ${\rho} Z_1^* +\sqrt{1-\rho^2} Z_2^*$; here, $Z_1^*$ and $Z_2^*$ are independent and identically distributed  standard normal  random variables.

As $\rho  Z_1^*+\sqrt{1-\rho^2} Z_2^*\sim N(0,1)$, we have
$\rho Z_1+\sqrt{1-\rho^2}Z_2=(\rho  Z_1^*+\sqrt{1-\rho^2} Z_2^*)\sqrt{\nu/Q}\sim t_\nu$. Then,
\begin{align*}
	\mathbb{P}\big(\rho Z_1+\sqrt{1-\rho^2}Z_2\in {B}_0\big)=F_\nu({B}_0).
\end{align*}

On the other hand, if $\boldsymbol{X}=(X_1,X_2)\sim {\rm BES}(\boldsymbol{\theta}_*, g_c)$, by Remark 3.7 of  \cite{Saulo2022}, we have
\begin{align*}
	\sqrt{{\nu+1\over (\nu+r^2)(1-\rho^2)}} \biggl({X_2-\mu_2\over \sigma_2}-\rho r\biggr)\, \Bigg\vert \, {X_1-\mu_1\over \sigma_1}=r\sim t_{\nu+1}.
\end{align*}
Consequently,
\begin{align*}
\mathbb{P}(T_{\nu+1}\leqslant x)
&=
	\mathbb{P}\left(\sqrt{{\nu+1\over (\nu+r^2)(1-\rho^2)}} \biggl({X_2-\mu_2\over \sigma_2}-\rho r\biggr)\leqslant x\, \Bigg\vert \, {X_1-\mu_1\over \sigma_1}=r\right)
	\\[0,2cm]
	&=
	\mathbb{P}\left(Z_2 \leqslant \sqrt{{\nu+r^2\over \nu+1}}\, x\, \Bigg\vert \, Z_1=r\right), \quad T_{\nu+1}\sim t_{\nu+1}.
\end{align*}
By taking $x=\sqrt{(\nu+1)/(\nu+r^2)}\, w$ with $r=\widetilde{t}_1$, we obtain
\begin{align*}
\mathbb{P}\Biggl(T_{\nu+1}\leqslant \sqrt{\nu+1\over \nu+\widetilde{t_1}^2}\,  w\Biggr)
=
	\mathbb{P}(Z_2 \leqslant w\, \vert \, Z_1=\widetilde{t}_1).
\end{align*}
By differentiating the above identity with respect to $w$, we get
\begin{align*}
\sqrt{\nu+1\over \nu+\widetilde{t_1}^2}\,	f_{\nu+1}\Biggl(\sqrt{\nu+1\over \nu+\widetilde{t_1}^2}\,  w\Biggr)
	=
f_{Z_2}(
w \, \vert\, Z_1=\widetilde{t}_1 ).
\end{align*}
Hence,
\begin{align*}
	\int_{{B}_\rho}		
f_{Z_2}(
w \, \vert\, Z_1=\widetilde{t}_1 )\, 
{\rm d}w
=
\sqrt{\nu+1\over \nu+\widetilde{t_1}^2}
	\int_{{B}_\rho}	
f_{\nu+1}\Biggl(\sqrt{\nu+1\over \nu+\widetilde{t_1}^2}\,  w\Biggr)\,
{\rm d}w
=
F_{\nu+1}\left(\sqrt{\nu+1\over \nu+\widetilde{t_1}^2}\, {B}_\rho \right).	
\end{align*}
Now, by using Theorem \ref{theo-pdf-cond}, the required result follows.
\end{proof}

%

\subsection{Mahalanobis Distance}\label{maha:sec}

The squared Mahalanobis distance of a random vector $\boldsymbol{T}=(T_1,T_2)$ and the vector $\log(\boldsymbol{\eta})=(\log(\eta_1),\log(\eta_2))$ of a bivariate log-symmetric distribution is defined as
\begin{align*}
d^2(\boldsymbol{T},\log(\boldsymbol{\eta}))
=
		{\widetilde{T_1}^2-2\rho\widetilde{T_1}\widetilde{T_2}+\widetilde{T_2}^2
		\over 
		1-\rho^2},
\end{align*}
where
$$
		\widetilde{T_i}
	=
	\log\biggl[\biggl({T_i\over \eta_i}\biggr)^{1/\sigma_i}\biggr], \ \eta_i=\exp(\mu_i), \ i=1,2.
$$

We now derive formulas for the CDF and PDF of the above random variable $d^2(\boldsymbol{T},\log(\boldsymbol{\eta}))$.

\begin{proposition}\label{Mahalanobis Distance}
	If $\boldsymbol{T}\sim {\rm BLS}(\boldsymbol{\theta},g_c)$, then the CDF of $d^2(\boldsymbol{T},\log(\boldsymbol{\eta}))$, denoted by $F_{d^2(\boldsymbol{T},\log(\boldsymbol{\eta}))}$, is given by
	\begin{align}
	F_{d^2(\boldsymbol{T},\log(\boldsymbol{\eta}))}(x)
	&=
	4
	\int_{0}^{\sqrt{x}}
	\biggl[
	F_{Z_2}\Big(\sqrt{x-z_1^2}\,\Big\vert\, Z_1=z_1\Big)-{1\over 2}
	\biggr] 
	f_{Z_1}(z_1)\, {\rm d}z_1 \, \boldsymbol{\cdot} \mathds{1}_{(0,\infty)}(x)
	\label{eq-1}
	\\[0,2cm]
	&=
	{4\over Z_{g_c}}\, 
	\int_{0}^{\sqrt{x}}
	\left[\int_{0}^{\sqrt{x-z_1^2}} g_c(z_1^2+z_2^2) \, {\rm d}z_2\right] {\rm d}z_1\, \boldsymbol{\cdot}  \mathds{1}_{(0,\infty)}(x), 
	\label{eq-2}
	\end{align}
	where $Z_{g_c}$ is as in Proposition \ref{partition function-simplicado}.
\end{proposition}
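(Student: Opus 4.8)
The plan is to use the stochastic representation of Proposition \ref{Stochastic Representation} to reduce the squared Mahalanobis distance to the radial quantity $Z_1^2+Z_2^2$, and then to compute its CDF by integrating out the first coordinate.

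First I would substitute $T_1=\eta_1\exp(\sigma_1 Z_1)$ and $T_2=\eta_2\exp(\sigma_2\rho Z_1+\sigma_2\sqrt{1-\rho^2}\,Z_2)$ into the definition of $d^2(\boldsymbol{T},\log(\boldsymbol{\eta}))$. This gives $\widetilde{T_1}=Z_1$ and $\widetilde{T_2}=\rho Z_1+\sqrt{1-\rho^2}\,Z_2$, and expanding the quadratic form shows that the $Z_1Z_2$ cross terms cancel, leaving $\widetilde{T_1}^2-2\rho\widetilde{T_1}\widetilde{T_2}+\widetilde{T_2}^2=(1-\rho^2)(Z_1^2+Z_2^2)$. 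Hence $d^2(\boldsymbol{T},\log(\boldsymbol{\eta}))=Z_1^2+Z_2^2\ge 0$, which already explains the presence of $\mathds{1}_{(0,\infty)}(x)$.

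Next, for $x>0$ I would condition on $Z_1$ to write
\[
F_{d^2(\boldsymbol{T},\log(\boldsymbol{\eta}))}(x)
=\mathbb{P}(Z_1^2+Z_2^2\le x)
=\int_{-\sqrt{x}}^{\sqrt{x}}\mathbb{P}\big(|Z_2|\le \sqrt{x-z_1^2}\ \big|\ Z_1=z_1\big)\,f_{Z_1}(z_1)\,{\rm d}z_1,
\]
the integration range reducing to $[-\sqrt{x},\sqrt{x}]$ because the conditional probability vanishes when $|z_1|>\sqrt{x}$. Since $f_{Z_2}(\cdot\mid Z_1=z_1)$ is an even function by Proposition \ref{joint and marginal pdfs}, one has $\mathbb{P}(|Z_2|\le a\mid Z_1=z_1)=2F_{Z_2}(a\mid Z_1=z_1)-1$. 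Moreover, as $f_{Z_1,Z_2}(z_1,z_2)=g_c(z_1^2+z_2^2)/Z_{g_c}$ depends on $z_1$ only through $z_1^2$, the conditional law of $Z_2$ given $Z_1=z_1$ coincides with the one given $Z_1=-z_1$; together with the evenness of $f_{Z_1}$ this makes the integrand an even function of $z_1$, so folding the integral onto $[0,\sqrt{x}]$ produces the factor $4$ and yields \eqref{eq-1}.

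Finally, to obtain \eqref{eq-2} I would use the evenness of $f_{Z_2}(\cdot\mid Z_1=z_1)$ together with $f_{Z_2}(z_2\mid Z_1=z_1)f_{Z_1}(z_1)=f_{Z_1,Z_2}(z_1,z_2)$ to rewrite $\big[F_{Z_2}(\sqrt{x-z_1^2}\mid Z_1=z_1)-\frac{1}{2}\big]f_{Z_1}(z_1)=\int_0^{\sqrt{x-z_1^2}}f_{Z_1,Z_2}(z_1,z_2)\,{\rm d}z_2$, and then replace $f_{Z_1,Z_2}(z_1,z_2)$ by $g_c(z_1^2+z_2^2)/Z_{g_c}$ via Proposition \ref{partition function-simplicado}. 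None of this is computationally heavy; the only point that needs care is the symmetry bookkeeping in the last two steps — the evenness of the conditional density and the invariance of the conditional law under $z_1\mapsto -z_1$ — which is exactly what lets every integral be collapsed to the positive region and generates the constant $4$.
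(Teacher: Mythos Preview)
Your proposal is correct and follows essentially the same route as the paper: reduce $d^2(\boldsymbol{T},\log(\boldsymbol{\eta}))$ to $Z_1^2+Z_2^2$ via the stochastic representation, condition on $Z_1$, and then exploit the evenness of $f_{Z_1}$ and of $f_{Z_2}(\cdot\mid Z_1=z_1)$ (Proposition~\ref{joint and marginal pdfs}) to fold both integrals onto the positive half-line. You are, if anything, slightly more explicit than the paper in noting the invariance of the conditional law under $z_1\mapsto -z_1$, which is indeed what justifies the outer folding.
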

\begin{proof}
	As $\boldsymbol{T}=(T_1,T_2)$ admits the stochastic representation in Proposition \ref{Stochastic Representation}, there exist $Z_1$ and $Z_2$ such that $\widetilde{T_1}=Z_1$ and $\widetilde{T_2}={\rho} Z_1+\sqrt{1-\rho^2} Z_2$. Then, a simple algebraic manipulation shows that
	\begin{align}\label{identity-dist-mahalanobis}
	d^2(\boldsymbol{T},\log(\boldsymbol{\eta}))
	= Z_1^2+Z_2^2.
	\end{align}
Upon using the law of total expectation, we get (for $x>0$)
\begin{align}
	F_{d^2(\boldsymbol{T},\log(\boldsymbol{\eta}))}(x)
	=
	\mathbb{E}[\mathbb{E}(\mathds{1}_{\{ Z_1^2+Z_2^2\leqslant x \}}\, \vert \, Z_1)]
	&=
	\mathbb{E}\Big[\mathbb{E}\Big(\mathds{1}_{\{\vert Z_2\vert \leqslant\sqrt{x-Z_1^2}\}}\, \Big\vert \, Z_1 \mathds{1}_{\{\vert Z_1\vert \leqslant\sqrt{x}\}}\Big)\Big]
	\nonumber
	\\[0,2cm]
	&=
	\int_{-\sqrt{x}}^{\sqrt{x}}
	\left[\int_{-\sqrt{x-z_1^2}}^{\sqrt{x-z_1^2}} f_{Z_2}(z_2\,\vert\, Z_1=z_1)\, {\rm d}z_2\right] f_{Z_1}(z_1)\, {\rm d}z_1. \label{first-eq}
\end{align}
As $f_{Z_2}(z_2\,\vert\, Z_1=z_1)$ and $f_{Z_1}(z_1)$ are both even functions (see Proposition \ref{joint and marginal pdfs}), the proof of the first equality in \eqref{eq-1} follows from \eqref{first-eq}. The second equality in \eqref{eq-2} follows by using the joint PDF $f_{Z_1,Z_2}$ given in Proposition \ref{joint and marginal pdfs} in \eqref{first-eq}.
\end{proof}

\begin{proposition}\label{Mahalanobis Distance-PDF}
	If $\boldsymbol{T}\sim {\rm BLS}(\boldsymbol{\theta},g_c)$, then the PDF of $d^2(\boldsymbol{T},\log(\boldsymbol{\eta}))$, denoted by $f_{d^2(\boldsymbol{T},\log(\boldsymbol{\eta}))}$, is given by
	\begin{align*}
	f_{d^2(\boldsymbol{T},\log(\boldsymbol{\eta}))}(x)={\pi\over Z_{g_c}}\, g_c(x), \quad x>0,
	\end{align*}
	where $Z_{g_c}$ is as in Proposition \ref{partition function-simplicado}.
\end{proposition}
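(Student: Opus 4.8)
The plan is to start from the closed form of the CDF already obtained in Proposition \ref{Mahalanobis Distance}, namely the representation \eqref{eq-2}, and simply differentiate it in $x$. The only mild subtlety is that $x$ enters \eqref{eq-2} simultaneously through the outer limit, the inner limit, and the integrand, so a direct application of the Leibniz rule would be clumsy. To sidestep this, I would first recognize the double integral in \eqref{eq-2} as an integral over a planar region and pass to polar coordinates, after which the $x$-dependence is concentrated in a single limit of integration and the differentiation becomes trivial.

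Concretely, for $x>0$ the bracketed double integral in \eqref{eq-2},
\begin{align*}
\int_{0}^{\sqrt{x}}\left[\int_{0}^{\sqrt{x-z_1^2}} g_c(z_1^2+z_2^2)\,{\rm d}z_2\right]{\rm d}z_1,
\end{align*}
is exactly the integral of $(z_1,z_2)\mapsto g_c(z_1^2+z_2^2)$ over the first-quadrant quarter disc $\{(z_1,z_2): z_1\geqslant 0,\ z_2\geqslant 0,\ z_1^2+z_2^2\leqslant x\}$, since the inner upper limit $\sqrt{x-z_1^2}$ traces precisely the circle of radius $\sqrt{x}$. Switching to polar coordinates $z_1=r\cos\theta$, $z_2=r\sin\theta$ with $0\leqslant\theta\leqslant\pi/2$ and $0\leqslant r\leqslant\sqrt{x}$, this equals $\tfrac{\pi}{2}\int_{0}^{\sqrt{x}} r\,g_c(r^2)\,{\rm d}r$, and the substitution $u=r^2$ turns it into $\tfrac{\pi}{4}\int_{0}^{x} g_c(u)\,{\rm d}u$. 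Substituting back into \eqref{eq-2} yields the clean expression
\begin{align*}
F_{d^2(\boldsymbol{T},\log(\boldsymbol{\eta}))}(x)=\frac{\pi}{Z_{g_c}}\int_{0}^{x} g_c(u)\,{\rm d}u,\qquad x>0.
\end{align*}

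With the CDF in this form, the conclusion is immediate: by the fundamental theorem of calculus, $f_{d^2(\boldsymbol{T},\log(\boldsymbol{\eta}))}(x)=\tfrac{\pi}{Z_{g_c}}\,g_c(x)$ for $x>0$ (at every continuity point of $g_c$, which is all that is required of a density). As a consistency check, integrating this over $(0,\infty)$ returns total mass $1$ precisely because $Z_{g_c}=\pi\int_{0}^{\infty} g_c(u)\,{\rm d}u$ by Proposition \ref{partition function-simplicado}, which also confirms the normalization is coherent.

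I do not expect a real obstacle here; the argument is essentially a change of variables followed by differentiation. The one point deserving a line of care is the passage to polar coordinates — verifying that the region in \eqref{eq-2} is genuinely the full quarter disc, and invoking $g_c\geqslant 0$ and its (local) integrability so that Tonelli's theorem and the change-of-variables formula apply — after which everything is routine.
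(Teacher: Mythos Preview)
Your argument is correct. Both you and the paper start from the CDF expression \eqref{eq-2} and differentiate it, so the overall strategy is the same; the difference lies in how the differentiation is carried out. The paper applies the Leibniz integral rule directly to the nested integral in \eqref{eq-2}: the boundary term at $z_1=\sqrt{x}$ vanishes, and the $\partial/\partial x$ of the inner integral produces $g_c(x)/\bigl(2\sqrt{x-z_1^2}\,\bigr)$, whose integral over $z_1\in(0,\sqrt{x})$ is the arcsine integral $\pi/2$, yielding the stated density. Your route---recognizing the quarter disc, passing to polar coordinates, and reducing the CDF to $\tfrac{\pi}{Z_{g_c}}\int_0^x g_c(u)\,{\rm d}u$ before differentiating---is cleaner: it isolates all the $x$-dependence in a single upper limit, avoids the nested Leibniz computation, and makes the normalization check against Proposition~\ref{partition function-simplicado} transparent. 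The paper's approach, on the other hand, stays closer to the form of \eqref{eq-2} and requires no geometric reinterpretation. Either way the result drops out in a couple of lines.
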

\begin{proof}
The proof follows immediately by differentiating \eqref{eq-2} with respect to $x$ and then by using the following well-known Leibniz integral rule:
\begin{align*}
	{{\rm d}\over {\rm d}x} \int_{a(x)}^{b(x)} h(x,y)\, {\rm d}y
	=
	h(x,b(x)) b'(x)-h(x,a(x)) a'(x)+ \int_{a(x)}^{b(x)} {\partial h(x,y)\over \partial x}\, {\rm d}y.
\end{align*}
\end{proof}

\begin{remark}
	\begin{itemize}
	\item[(i)]{\it  Gaussian generator.} By taking $g_c(x)=\exp(-x/2)$ and $Z_{g_c}=2\pi$ (see Table \ref{table:1}), and then applying Proposition \ref{Mahalanobis Distance-PDF},  we get
	\begin{align*}
		f_{d^2(\boldsymbol{T},\boldsymbol{\eta})}(x)
	=
{1\over 2}\, \exp\biggl(-{x\over 2}\biggr)
=
{1\over 2^{k/2}\Gamma(k/2)}\, x^{(k/2)-1} \exp\biggl(-{x\over 2}\biggr),
\quad \text{with} \ k=2.
	\end{align*}
But, the formula on the right is the PDF of a random variable following a chi-squared distribution with $k$ degrees of freedom ($\chi^2_k$). Hence, $d^2(\boldsymbol{T},\log(\boldsymbol{\eta}))\sim \chi^2_2$;
	
	\item[(ii)] {\it Student-$t$ generator}. By taking $g_c(x)=(1+(x/\nu))^{-(\nu+2)/2}$ and $Z_{g_c}={{\Gamma({\nu/ 2})}\nu\pi/{\Gamma({(\nu+2)/ 2})}}$ (see Table \ref{table:1}), and then applying Proposition \ref{Mahalanobis Distance-PDF}, we get
		\begin{align*}
	f_{d^2(\boldsymbol{T},\boldsymbol{\eta})}(x)
	&=
	{{\Gamma({(\nu+2)/ 2})}\over {\Gamma({\nu/  2})}\nu}\, \biggl(1+{x\over\nu}\biggr)^{-(\nu+2)/2}
	\\[0,2cm]
	&=
	{1\over 2}\, {\sqrt{[d_1(x/2)]^{d_1}d_2^{d_2}\over [d_1(x/2)+d_2]^{d_1+d_2}}\over (x/2){\rm B}(d_1/2, d_2/2)},
	\quad \text{with} \ d_1=2\ \text{and} \ d_2=\nu.
	\end{align*}
	Here, ${\rm B}(x,y)=\Gamma(x)\Gamma(y)/\Gamma(x+y)$, $x>0, y>0$, is the complete beta function. Observe that the formula on the second identity above is the PDF of a random variable $2X$, where $X$ follows the $F$-distribution with $d_1$ and $d_2$ degrees of freedom ($F_{d_1,d_2}$). Thus, we have $d^2(\boldsymbol{T},\log(\boldsymbol{\eta})) \sim 2 F_{2,\nu}$.
	\end{itemize}
\end{remark}

\subsection{Independence}

\begin{proposition}
Let $\boldsymbol{T}\sim {\rm BLS}(\boldsymbol{\theta},g_c)$. If $\rho=0$ and the density generator $g_c$ in \eqref{PDF} is such that
\begin{align}\label{kernel-dec}
g_c(x^2+y^2)
=g_{c_1}(x^2) 
g_{c_2}(y^2)
\quad \forall (x,y)\in\mathbb{R}^2,
\end{align}
for some density generators $g_{c_1}$ and $g_{c_2}$, then $T_1$ and $T_2$ are statistically independent.
\end{proposition}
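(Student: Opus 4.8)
The plan is to set $\rho=0$ directly in the joint density \eqref{PDF}, use the product hypothesis \eqref{kernel-dec} to split the density generator, and then identify the two resulting factors, after the correct normalization, with the marginal densities of $T_1$ and $T_2$. Morally this is just the statement that the elliptical vector $\boldsymbol{X}=(\log T_1,\log T_2)$ of \eqref{PDF-symmetric} has independent components once it is uncorrelated and has a factorizable generator, so that $T_1=\exp(X_1)$ and $T_2=\exp(X_2)$ are independent as well; I will carry out the density computation. Putting $\rho=0$ in \eqref{PDF} and applying \eqref{kernel-dec} with $x=\widetilde{t_1}$ and $y=\widetilde{t_2}$ gives
\begin{align*}
f_{T_1,T_2}(t_1,t_2;\boldsymbol{\theta})
=
\frac{1}{t_1 t_2\sigma_1\sigma_2 Z_{g_c}}\, g_c\big(\widetilde{t_1}^2+\widetilde{t_2}^2\big)
=
\frac{1}{t_1 t_2\sigma_1\sigma_2 Z_{g_c}}\, g_{c_1}\big(\widetilde{t_1}^2\big)\, g_{c_2}\big(\widetilde{t_2}^2\big).
\end{align*}

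Next I would obtain the marginal density of $T_1$ by integrating this over $t_2\in(0,\infty)$. Since $g_c\geqslant 0$ forces $g_{c_1},g_{c_2}\geqslant 0$, Tonelli's theorem makes the interchange unconditional, and the substitution $u=\widetilde{t_2}=\sigma_2^{-1}(\log t_2-\mu_2)$, for which ${\rm d}u={\rm d}t_2/(\sigma_2 t_2)$ and $u$ ranges over all of $\mathbb{R}$ as $t_2$ ranges over $(0,\infty)$, yields
\begin{align*}
f_{T_1}(t_1)
=
\int_{0}^{\infty} f_{T_1,T_2}(t_1,t_2;\boldsymbol{\theta})\,{\rm d}t_2
=
\frac{g_{c_1}\big(\widetilde{t_1}^2\big)}{t_1\sigma_1 Z_{g_c}}\int_{-\infty}^{\infty} g_{c_2}(u^2)\,{\rm d}u
=
\frac{A_2}{t_1\sigma_1 Z_{g_c}}\, g_{c_1}\big(\widetilde{t_1}^2\big),
\end{align*}
where $A_i:=\int_{-\infty}^{\infty} g_{c_i}(u^2)\,{\rm d}u$, $i=1,2$; the same computation gives $f_{T_2}(t_2)=A_1\, g_{c_2}(\widetilde{t_2}^2)/(t_2\sigma_2 Z_{g_c})$.

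To finish, I would invoke Proposition \ref{partition function-simplicado} together with \eqref{kernel-dec} and Tonelli once more to evaluate
\begin{align*}
Z_{g_c}
=
\int_{-\infty}^{\infty}\int_{-\infty}^{\infty} g_c(z_1^2+z_2^2)\,{\rm d}z_1{\rm d}z_2
=
\Big(\int_{-\infty}^{\infty} g_{c_1}(z_1^2)\,{\rm d}z_1\Big)\Big(\int_{-\infty}^{\infty} g_{c_2}(z_2^2)\,{\rm d}z_2\Big)
=
A_1 A_2 ,
\end{align*}
so that $0<A_1,A_2<\infty$ and the displays above are legitimate; hence $f_{T_1}(t_1)=g_{c_1}(\widetilde{t_1}^2)/(t_1\sigma_1 A_1)$ and $f_{T_2}(t_2)=g_{c_2}(\widetilde{t_2}^2)/(t_2\sigma_2 A_2)$. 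Multiplying these and using $Z_{g_c}=A_1A_2$ and \eqref{kernel-dec} a last time,
\begin{align*}
f_{T_1}(t_1)\, f_{T_2}(t_2)
=
\frac{g_{c_1}\big(\widetilde{t_1}^2\big)\,g_{c_2}\big(\widetilde{t_2}^2\big)}{t_1 t_2\sigma_1\sigma_2 A_1 A_2}
=
\frac{g_c\big(\widetilde{t_1}^2+\widetilde{t_2}^2\big)}{t_1 t_2\sigma_1\sigma_2 Z_{g_c}}
=
f_{T_1,T_2}(t_1,t_2;\boldsymbol{\theta}),
\end{align*}
which is exactly the independence of $T_1$ and $T_2$. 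There is no genuine obstacle in this argument: the only point that needs a word of care is the finiteness and strict positivity of the constants $A_1,A_2$, which is delivered by the identity $Z_{g_c}=A_1A_2$ (with $0<Z_{g_c}<\infty$) together with $g_{c_i}\geqslant 0$, and the systematic use of Tonelli rather than Fubini, which renders every interchange of integral and product above unconditional.
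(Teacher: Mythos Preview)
Your proof is correct and follows essentially the same route as the paper: set $\rho=0$, factor the joint density via \eqref{kernel-dec}, identify $Z_{g_c}=A_1A_2$ (the paper writes this as $Z_{g_c}=Z_{g_{c_1}}Z_{g_{c_2}}$), and conclude that the joint density is the product of the marginals. The only cosmetic difference is that the paper, rather than computing the marginals by direct integration, invokes a standard textbook result (Proposition~2.5 of \cite{James2004}) stating that a joint density which factors as $f_1(t_1)f_2(t_2)$ forces independence with $f_{T_i}=f_i$; your explicit computation simply unpacks that citation. One small note: nonnegativity of $g_{c_1},g_{c_2}$ is part of the hypothesis (they are assumed to be density generators), not a consequence of $g_c\geqslant 0$ alone.
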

\begin{proof}
When $\rho=0$, from \eqref{kernel-dec}, the joint density of $(T_1,T_2)$ is
$$
f_{T_1,T_2}(t_1,t_2;\boldsymbol{\theta})
=
f_1(t_1;\mu_1,\sigma_1)
f_2(t_2;\mu_2,\sigma_2), \quad\forall(t_1,t_2)\in(0,\infty)\times (0,\infty),
$$ 
and consequently $Z_{g_c}=Z_{g_{c_1}}Z_{g_{c_2}}$,
where
\begin{align*}
f_i(t_i;\mu_i,\sigma_i)=
	{1\over t_i\sigma_iZ_{g_{c_i}}}\,
g_{c_i}(\widetilde{t_i}^2), \ t_i>0,
\quad 
\text{and}
\quad
Z_{g_{c_i}}
=
\int_{-\infty}^{\infty}
g_{c_i}({z_i}^2)\, {\rm d}z_i, \quad i=1,2,
\end{align*}
and $\widetilde{t_i}$ is as in \eqref{PDF}.
A simple calculation shows that $f_1$ and $f_2$ are densities functions (in fact, $f_1$ and $f_2$ are densities associated with two univariate continuous symmetric random variables; see \cite{Vanegas2016}. Then, from Proposition 2.5 of \cite{James2004}, it follows that $T_1$ and $T_2$ are independent, and even more, that $f_i=f_{T_i}$, for $i=1,2$.
\end{proof}

\begin{remark}
Observe that, in Table \ref{table:1}, the density generator of the bivariate log-normal is the unique one that satisfies the condition in \eqref{kernel-dec}.
\end{remark}

\subsection{Moments}\label{moments}

\begin{proposition}\label{Real moments}
Let $\boldsymbol{X}=(X_1,X_2)\sim {\rm BES}(\boldsymbol{\theta}_*, g_c)$ and $\boldsymbol{T}=(T_1,T_2)\sim {\rm BLS}(\boldsymbol{\theta},g_c)$.
If the moment-generating function (MGF) of $X_i$, denoted by $M_{X_i}(s_i)$, $i=1,2$, exists, then the moments of $T_i$ are
\begin{align*}
 \mathbb{E}(T_i^r)=\eta_i^r \vartheta(\sigma_i^2r^2), \quad \text{with} \ \eta_i=\exp(\mu_i), \ i=1,2, \ r\in\mathbb{R},
\end{align*}
for some scalar function $\vartheta$, which is called the characteristic generator \cite[see][]{Fang1990}.

For example, when  $g_c(x)=\exp(-x/2)$ (Gaussian generator), $\vartheta(x)=\exp(x/2)$, and when $g_c(x)=(1+(x/\nu))^{-(\nu+2)/2}$, $\nu>0$ (Student-$t$ generator), $\vartheta$ does not exist.
\end{proposition}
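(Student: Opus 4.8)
The plan is to reduce everything to the one-dimensional marginal $X_i=\log(T_i)$ and exploit the elliptical structure already recorded in Proposition \ref{Stochastic Representation}. Since $T_i=\eta_i\exp(\sigma_i V_i)$, where $V_1=Z_1$ and $V_2=\rho Z_1+\sqrt{1-\rho^2}Z_2$, we have $\mathbb{E}(T_i^r)=\eta_i^r\,\mathbb{E}[\exp(r\sigma_i V_i)]=\eta_i^r\,M_{V_i}(r\sigma_i)$, so the claim is equivalent to showing that the moment generating function of the standardized elliptical component $V_i$ depends on its argument only through its square, i.e. $M_{V_i}(s)=\vartheta(s^2)$ for a single function $\vartheta$ that does not depend on $i$. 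Equivalently, writing $X_i=\mu_i+\sigma_i V_i$, one has $M_{X_i}(s_i)=\exp(\mu_i s_i)\vartheta(\sigma_i^2 s_i^2)$, and then $\mathbb{E}(T_i^r)=M_{X_i}(r)=\eta_i^r\vartheta(\sigma_i^2 r^2)$ upon setting $s_i=r$.

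First I would invoke Proposition \ref{joint and marginal pdfs}: each $V_i$ has a density of the form $f_{V_i}(v)=c\int_{|v|}^{\infty}2g_c(w^2)(1-v^2/w^2)^{-1/2}\,{\rm d}w$ with $c=1/\big(\pi\int_0^\infty g_c(u)\,{\rm d}u\big)$, and this formula is identical for $i=1,2$; in particular $f_{V_i}$ is an even function. Hence $M_{V_i}(s)=\int_{-\infty}^{\infty}e^{sv}f_{V_i}(v)\,{\rm d}v$ does not depend on $i$, and by evenness of $f_{V_i}$ it equals $\int_{-\infty}^{\infty}\cosh(sv)f_{V_i}(v)\,{\rm d}v$, which is manifestly a function of $s^2$ alone (since $\cosh(sv)=\sum_{k\ge 0}(sv)^{2k}/(2k)!$ involves only even powers of $s$). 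Define $\vartheta$ by this common value, i.e. $\vartheta(x):=\int_{-\infty}^{\infty}\cosh(\sqrt{x}\,v)f_{V_1}(v)\,{\rm d}v$; then $M_{X_i}(s_i)=\exp(\mu_i s_i)\vartheta(\sigma_i^2 s_i^2)$, and substituting $s_i=r$ gives $\mathbb{E}(T_i^r)=\eta_i^r\vartheta(\sigma_i^2 r^2)$, as required. That $\vartheta$ is the usual characteristic generator of the underlying elliptical law (up to the sign convention relating $\vartheta(-t^2)$ to the characteristic function) follows from the standard theory in \cite{Fang1990}; this identification is what justifies the name.

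The main obstacle is purely an existence/regularity issue rather than an algebraic one: the formula only makes sense where the relevant integrals converge, so I must carefully state that the conclusion holds precisely on the set of $r$ for which $M_{X_i}(r)$ exists, and note that the evenness argument shows $M_{X_1}$ and $M_{X_2}$ have the same (symmetric) domain of convergence after the affine normalization. The two worked examples at the end are then immediate: for the Gaussian generator $V_i\sim N(0,1)$ so $M_{V_i}(s)=\exp(s^2/2)$, giving $\vartheta(x)=\exp(x/2)$ and the finiteness of all moments; for the Student-$t$ generator $V_i$ has heavy (polynomial) tails, so $\mathbb{E}[\exp(r\sigma_i V_i)]=\infty$ for every $r\ne 0$ and no such $\vartheta$ exists — this is exactly the hypothesis "if the MGF of $X_i$ exists" failing, and I would remark that in that case one should instead work with fractional moments $\mathbb{E}(T_i^r)$ for $|r|$ small, which do exist and are still of the stated form with $\vartheta$ replaced by a locally defined function.
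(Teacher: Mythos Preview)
Your argument is correct in spirit and reaches the conclusion, but it follows a different route from the paper's proof. The paper does not work with the marginal density at all: it writes $\mathbb{E}(T_i^r)=M_{X_i}(r)=\exp(r\mu_i)\,\varphi_{S_i}(-i\sigma_i r)=\exp(r\mu_i)\,\varphi_{S_1,S_2}(-i\sigma_i r,0)$ and then plugs into the known closed form of the joint characteristic function of a standardized elliptical vector, $\varphi_{S_1,S_2}(s_1,s_2)=\vartheta(s_1^2+2\rho s_1s_2+s_2^2)$ \cite[Item~13.10]{Bala2009}, which immediately yields $\vartheta(\sigma_i^2 r^2)$ and identifies $\vartheta$ with the standard characteristic generator. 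Your approach is more elementary and self-contained (no appeal to the joint characteristic function; only evenness of the marginal density is used), whereas the paper's approach is shorter and makes the link to the characteristic generator of \cite{Fang1990} explicit rather than needing a separate identification step.

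One small imprecision to fix: Proposition~\ref{joint and marginal pdfs} gives the common marginal density of $Z_1$ and $Z_2$, not of $V_2=\rho Z_1+\sqrt{1-\rho^2}\,Z_2$. To conclude that $V_2$ has that same even density you need an extra line---either invoke the exchange symmetry of the BES PDF \eqref{PDF-symmetric} (so that $(X_1-\mu_1)/\sigma_1$ and $(X_2-\mu_2)/\sigma_2$ are identically distributed), or note that $(Z_1,Z_2)$ is spherically symmetric so any unit linear combination has the law of $Z_1$. Also, the $\cosh$ power-series step is unnecessary: from evenness of $f_{V_i}$ you get $M_{V_i}(s)=M_{V_i}(-s)$ directly, hence $M_{V_i}$ is a function of $s^2$ on its (symmetric) domain of finiteness.
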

\begin{proof}
We only show the case for $i=1$, because the other one follows in a similar way.

As the random variable $X_1$ has a MGF $M_{X_1}(s_1)$, the domain of the characteristic function $\varphi_{X_1}(t)$ can be extended to the complex plane, and
\begin{align*}
M_{X_1}(s_1)= \varphi_{X_1}(-is_1).
\end{align*}
As $X_1=\log(T_1)$, by using the above identity, we immediately get
\begin{align}\label{relation}
\mathbb{E}(T_1^r)
=\mathbb{E}[\exp(rX_1)] \nonumber
=M_{X_1}(r)
&=\exp(r\mu_1) M_{S_1}(\sigma_1r)
\\[0,2cm]
&=\exp(r\mu_1)\varphi_{S_1}(-i\sigma_1r)
=\exp(r\mu_1)\varphi_{S_1,S_2}(-i\sigma_1r,0),
\end{align}
with $(S_1,S_2)\sim {\rm BES}(\boldsymbol{\theta}_{*_0}, g_c)$, $\boldsymbol{\theta}_{*_0}=(0,0,1,1,\rho)$, and $\varphi_{S_1,S_2}(s_1,0)$
is the marginal characteristic function. On the other hand, the characteristic function of the BES distribution  is given by \cite[see Item 13.10, p. 595 of][]{Bala2009}
\begin{align}\label{CF}
	\varphi_{S_1,S_2}(s_1,s_2)=\vartheta(s_1^2+2\rho s_1s_2+s_2^2),
\end{align}
where $\vartheta$ is the characteristic generator specified in the statement of the proposition.
Finally, by using \eqref{CF} on the right-hand side of \eqref{relation}, the required result follows.
\end{proof}

\subsection{Correlation function}\label{Correlation function}

By using the stochastic representation (in Proposition \ref{Stochastic Representation}) of $\boldsymbol{T}=(T_1,T_2)$ and the law of total expectation, we have
\begin{align*}
\mathbb{E}(T_1T_2)
&=
\mathbb{E}(\mathbb{E}(T_1T_2\,\vert\, T_1))
=
\mathbb{E}(T_1\mathbb{E}(T_2\,\vert\, T_1))
\\[0,2cm]
&=
\eta_1
\eta_2
\mathbb{E} 
\bigl[
\exp\big((\sigma_1+\sigma_2 {\rho}) Z_1\big)\,
\mathbb{E} 
\big(
\exp(\sigma_2\sqrt{1-\rho^2} Z_2)
\,\big\vert\, Z_1 
\big)
\bigr],
\end{align*}
where $Z_1$ and $Z_2$ are as defined in Proposition \ref{Stochastic Representation}.

Hence, from the formula of moments (in Proposition \ref{Real moments}), we obtain the correlation function of $T_1$ and $T_2$ to be
\begin{align*}
	\rho(T_1,T_2)
	=
	\dfrac{
		\mathbb{E} 
		\bigl[
		\exp\big((\sigma_1+\sigma_2 {\rho}) Z_1\big)\,
		\mathbb{E} 
		\big(
		\exp(\sigma_2\sqrt{1-\rho^2} Z_2)
		\,\big\vert\, Z_1 
		\big)
		\bigr]-\vartheta(\sigma_1^2)\vartheta(\sigma_2^2)}
	{\sqrt{\vartheta(4\sigma_1^2)-\vartheta^2(\sigma_1^2)}\, \sqrt{\vartheta(4\sigma_2^2)-\vartheta^2(\sigma_2^2)}},
\end{align*}
where $\vartheta$ is a scalar function as stated in Proposition \ref{Real moments}.

It is a simple task to verify that
when  $g_c(x)=\exp(-x/2)$ (Gaussian generator), $\rho(T_1,T_2)=[\exp(\sigma_1\sigma_2\rho)-1]/\big[\sqrt{\exp(\sigma_1^2)-1} \, \sqrt{\exp(\sigma_2^2)-1}\,\big]$, and when $g_c(x)=(1+(x/\nu))^{-(\nu+2)/2}$, $\nu>0$ (Student-$t$ generator), $\rho(T_1,T_2)$ does not exist.

\subsection{Some other properties}

If $\boldsymbol{T}=(T_1,T_2)\sim {\rm BLS}(\boldsymbol{\theta},g_c)$, then the following properties follow immediately as a consequence of the definition of the BLS distribution, analogous to the properties stated by \cite{Vanegas2016}:
\begin{itemize}
	\item[(P1)]  The CDF of $\boldsymbol{T}$ is given by
	$F_{T_1,T_2}(t_1,t_2;\boldsymbol{\theta})
	=F_{S_1,S_2}(\widetilde{t}_1,\widetilde{t}_2;\boldsymbol{\theta}_{*_0})$, with $(S_1,S_2)\sim {\rm BES}(\boldsymbol{\theta}_{*_0}, g_c)$ and $\boldsymbol{\theta}_{*_0}=(0,0,1,1,\rho)$;
	\item[(P2)]  The random vector $(T_1^*,T_2^*)=([T_1/\eta_1]^{1/\sigma_1},[T_2/\eta_2]^{1/\sigma_2})$ follows standard BLS distribution. In other words, $(T_1^*,T_2^*)\sim {\rm BLS}(\boldsymbol{\theta}_0, g_c)$ with $\boldsymbol{\theta}_0=(1,1,1,1,\rho)$;
	\item[(P3)] $(c_1T_1,c_2T_2)\sim {\rm BLS}(c_1\eta_1,c_2\eta_2,\sigma_1,\sigma_2,g_c)$ for all constants $c_1,c_2>0$;
	\item[(P4)] $(T_1^{c_1},T_2^{c_2})\sim {\rm BLS}(\eta_1^{c_1},\eta_2^{c_2},c_1^2\sigma_1,c_2^2\sigma_2,g_c)$ for all constants $c_1\neq 0$ and $c_2\neq 0$.
\end{itemize}

\begin{proposition}
	If $\boldsymbol{T}=(T_1,T_2)\sim {\rm BLS}(\boldsymbol{\theta},g_c)$, then
	the random vectors $({\eta_1/ T_1},{\eta_2/ T_2})$ and $({T_1/ \eta_1},{T_2/ \eta_2})$ are identically distributed. Furthermore,
	$({\eta_1/ T_1},{\eta_2/ T_2})\sim {\rm BLS}(\boldsymbol{\theta}_{\bullet}, g_c)$ and $({T_1/ \eta_1},{T_2/ \eta_2})\sim {\rm BLS}(\boldsymbol{\theta}_{\bullet}, g_c)$, with $\boldsymbol{\theta}_{\bullet}=(1,1,\sigma_1,\sigma_2,\rho)$.
\end{proposition}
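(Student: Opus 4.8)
The plan is to reduce everything to the elliptical (log) symmetry already established, via the stochastic representation in Proposition~\ref{Stochastic Representation}. Write $T_i = \eta_i \exp(\sigma_i W_i)$, where $W_1 = Z_1$ and $W_2 = \rho Z_1 + \sqrt{1-\rho^2}\,Z_2$, and $(Z_1,Z_2)$ is the radial-angular mixture described there. The first observation is that $\eta_i/T_i = \exp(-\sigma_i W_i) = 1 \cdot \exp(\sigma_i (-W_i))$ and $T_i/\eta_i = \exp(\sigma_i W_i) = 1 \cdot \exp(\sigma_i W_i)$, so in both cases we have a BLS-type representation with location parameters $\eta_1 = \eta_2 = 1$ and the same $\sigma_1,\sigma_2$; what remains is to identify the joint law of the exponents.

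First I would record that $(-W_1,-W_2) = (-Z_1,\, \rho(-Z_1) + \sqrt{1-\rho^2}(-Z_2))$, i.e.\ the pair $(-Z_1,-Z_2)$ feeds through exactly the same linear map as $(Z_1,Z_2)$ does for $(W_1,W_2)$. Hence it suffices to show $(-Z_1,-Z_2) \stackrel{d}{=} (Z_1,Z_2)$. This is immediate from the joint PDF in Proposition~\ref{joint and marginal pdfs}, namely $f_{Z_1,Z_2}(z_1,z_2) \propto g_c(z_1^2+z_2^2)$, which is invariant under $(z_1,z_2)\mapsto(-z_1,-z_2)$ (indeed under any sign change of the coordinates). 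Therefore $(-W_1,-W_2) \stackrel{d}{=} (W_1,W_2)$, and consequently
\begin{align*}
\Bigl(\tfrac{\eta_1}{T_1},\tfrac{\eta_2}{T_2}\Bigr)
= \bigl(\exp(\sigma_1(-W_1)),\exp(\sigma_2(-W_2))\bigr)
\stackrel{d}{=}
\bigl(\exp(\sigma_1 W_1),\exp(\sigma_2 W_2)\bigr)
= \Bigl(\tfrac{T_1}{\eta_1},\tfrac{T_2}{\eta_2}\Bigr),
\end{align*}
which is the first assertion. To identify this common law as ${\rm BLS}(\boldsymbol{\theta}_\bullet,g_c)$ with $\boldsymbol{\theta}_\bullet=(1,1,\sigma_1,\sigma_2,\rho)$, I would simply invoke the stochastic-representation characterization (Proposition~\ref{Stochastic Representation}) in the converse direction: the pair $(\exp(\sigma_1 W_1),\exp(\sigma_2 W_2))$ is exactly of the form stated there with $\eta_1=\eta_2=1$ and the given $\sigma_i,\rho$, hence has the asserted BLS law. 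Alternatively one can substitute directly into the density \eqref{PDF}: under $s_i = t_i/\eta_i$ the Jacobian factors $1/t_i$ become $1/(s_i\eta_i)\cdot\eta_i = 1/s_i$ and $\widetilde{t_i}$ becomes $\log(s_i^{1/\sigma_i})$, which is precisely the $\widetilde{\,\cdot\,}$ of a standard-location BLS; and under $s_i=\eta_i/t_i$ one gets the same after noting the argument of $g_c$ is unchanged by the simultaneous sign flip $\widetilde{t_i}\mapsto -\widetilde{t_i}$ (both in the diagonal terms and, crucially, in the cross term $-2\rho\widetilde{t_1}\widetilde{t_2}$, which is also preserved).

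There is essentially no hard obstacle here; the only point requiring a moment's care is that the cross term in the quadratic form $\widetilde{t_1}^2 - 2\rho\widetilde{t_1}\widetilde{t_2} + \widetilde{t_2}^2$ survives the reciprocal transformation, i.e.\ that flipping the sign of \emph{both} $\widetilde{t_1}$ and $\widetilde{t_2}$ (not just one) leaves the Mahalanobis-type form invariant — this is why $\rho$ is unchanged in $\boldsymbol{\theta}_\bullet$ rather than negated. Equivalently, at the level of $(Z_1,Z_2)$, the symmetry being used is the full central symmetry $g_c(\|\cdot\|^2)$ rather than a coordinatewise one, so the linear map defining $(W_1,W_2)$ commutes with negation and the correlation structure is preserved.
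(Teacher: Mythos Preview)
Your argument is correct. The only quibble is terminological: when you say you invoke Proposition~\ref{Stochastic Representation} ``in the converse direction,'' you are in fact using it in the direction stated there (the representation implies the BLS law), so no extra justification is needed.

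Your route differs from the paper's. The paper works directly with distribution functions and densities: it expresses $\mathbb{P}(\eta_1/T_1\leqslant w_1,\eta_2/T_2\leqslant w_2)$ via the inclusion--exclusion identity for rectangles, writes $\mathbb{P}(T_1/\eta_1\leqslant w_1,T_2/\eta_2\leqslant w_2)=F_{T_1,T_2}(\eta_1 w_1,\eta_2 w_2)$, differentiates both in $(w_1,w_2)$, and checks that the resulting joint PDFs coincide and equal the ${\rm BLS}(\boldsymbol{\theta}_\bullet,g_c)$ density. Your approach instead lifts the problem to the exponent vector $(W_1,W_2)$ and uses the central symmetry $(Z_1,Z_2)\stackrel{d}{=}(-Z_1,-Z_2)$, which is immediate from the spherical form of $f_{Z_1,Z_2}$ in Proposition~\ref{joint and marginal pdfs}. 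Your method is more conceptual and makes transparent \emph{why} $\rho$ is preserved (the simultaneous sign flip leaves the cross term intact), whereas the paper's computation is more self-contained, relying only on the explicit PDF~\eqref{PDF} and elementary calculus without appealing to the stochastic representation or the auxiliary result on $f_{Z_1,Z_2}$.
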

\begin{proof}	
	By using the well-known identity that \cite[see e.g. p. 59 of][]{James2004}
	\begin{align*}
	\mathbb{P}(a_1 < X \leqslant b_1, a_2 < Y \leqslant b_2) = F_{X,Y}(b_1, b_2) - F_{X,Y}(b_1, a_2) - F_{X,Y}(a_1, b_2) + F_{X,Y}(a_1, a_2)
	\end{align*}
for any two random variables $X$ and $Y$ and for all $a_1<b_1$ and $a_2<b_2$, with $a_1=\eta_1/w_1$, $b_1=\infty$, $a_2=\eta_2/w_2$ and $b_2=\infty$, for all $(w_1,w_2)\in (0,\infty)^2$, we obtain
	\begin{align}\label{P1}
	\mathbb{P}\Big({\eta_1\over T_1} \leqslant w_1, {\eta_2\over T_2} \leqslant w_2\Big) 
	=
	1- F_{T_2}\Big({\eta_2\over w_2}\Big) -F_{T_1}\Big({\eta_1\over w_1}\Big)+F_{T_1,T_2}\Big({\eta_1\over w_1},{\eta_2\over w_2}\Big).
	\end{align}
	
   Because
	\begin{align}\label{P2}
	\mathbb{P}\biggl({T_1\over \eta_1} \leqslant w_1, {T_2\over \eta_2} \leqslant w_2\biggr) 
	=
	F_{T_1,T_2}({\eta_1 w_1},{\eta_2 w_2}),
	\end{align}
	by taking partial derivatives with respect to $w_1$ and $w_2$ in \eqref{P1}  and \eqref{P2}, we have the joint PDF of ${\eta_1/ T_1}$ and ${\eta_2/ T_2}$,  and  the joint PDF of ${T_1/ \eta_1}$ and ${T_2/ \eta_2}$, to be related as follows:
	\begin{align*}
	f_{{\eta_1\over T_1}, {\eta_2\over T_2}}(w_1,w_2)
	=
	f_{{T_1\over \eta_1}, {T_2\over \eta_2}}(w_1,w_2)	
	= 
	{1\over w_1w_2\sigma_1\sigma_2\sqrt{1-\rho^2}Z_{g_c}}\,
	g_c\Biggl(
	{\widetilde{w}_1^2-2\rho\widetilde{w}_1\widetilde{w}_2+\widetilde{w}_2^2
		\over 
		1-\rho^2}
	\Biggr),
	\quad 
	w_1,w_2>0,
	\end{align*}
	with  $\widetilde{w_i}=\log({w_i}^{1/\sigma_i})$, $i=1,2$. This completes the proof of the proposition.
\end{proof}

\section{Maximum likelihood estimation} \label{Sec:4}
\noindent
Let $\{(T_{1i},T_{2i}):i=1,\ldots,n\}$ be a bivariate random sample of size $n$ from the ${\rm BLS}(\boldsymbol{\theta},g_c)$ distribution with PDF as given in \eqref{PDF}, and let $(t_{1i},t_{2i})$ be the corresponding sample of $(T_{1i},T_{2i})$. Then, the log-likelihood function for $\boldsymbol{\theta}=(\eta_1,\eta_2,\sigma_1,\sigma_2,\rho)$,
without the additive constant, is given by
\begin{align*}
\ell(\boldsymbol{\theta})
=
-n\log(\sigma_1)-n\log(\sigma_2)-{n\over 2}\,\log\big({1-\rho^2}\big)
+
\sum_{i=1}^{n}
\log	
g_c\Biggl(
{\widetilde{t_{1i}}^2-2\rho\widetilde{t}_{1i}\widetilde{t}_{2i}+\widetilde{t_{2i}}^2
	\over 
	1-\rho^2}
\Biggr), \quad 
t_{1i},t_{2i}>0,
\end{align*}
where
$$ 
\widetilde{t}_{ki}=\log\biggl[\Bigl({t_{ki}\over \eta_k}\Bigr)^{1/\sigma_k}\biggr], \ \eta_k=\exp(\mu_k), \ k=1,2; \ i=1,\ldots,n.
$$

In the case when a supremum $\widehat{\boldsymbol{\theta}}=(\widehat{\eta_1},\widehat{\eta_2},\widehat{\sigma_1},\widehat{\sigma_2},\widehat{\rho})$ exists, it must satisfy the likelihood equations
\begin{align}\label{likelihood equation}
{\partial \ell({\boldsymbol{\theta}})\over\partial\eta_1}
\bigg\vert_{{\boldsymbol{\theta}}=\widehat{\boldsymbol{\theta}}}
=0,
\quad 
{\partial \ell(\boldsymbol{\theta})\over\partial\eta_2}=0,
\quad 
{\partial\ell(\boldsymbol{\theta})\over\partial\sigma_1}
\bigg\vert_{{\boldsymbol{\theta}}=\widehat{\boldsymbol{\theta}}}=0,
\quad 
{\partial\ell(\boldsymbol{\theta})\over\partial\sigma_2}
\bigg\vert_{{\boldsymbol{\theta}}=\widehat{\boldsymbol{\theta}}}=0,
\quad 
{\partial\ell(\boldsymbol{\theta})\over\partial\rho}
\bigg\vert_{{\boldsymbol{\theta}}=\widehat{\boldsymbol{\theta}}}=0,
\end{align}
where
\begin{align}
	&{\partial \ell(\boldsymbol{\theta})\over\partial\eta_1}
	=
	\frac{2}{\sigma_1\eta_1(1-\rho^2)}
	\sum_{i=1}^{n}
\big(\rho \widetilde{t}_{2i} -\widetilde{t}_{1i}\big)
G(\widetilde{t}_{1i},\widetilde{t}_{2i})
, \nonumber
	\\[0,1cm]
&{\partial \ell(\boldsymbol{\theta})\over\partial\eta_2}
=
\frac{2}{\sigma_2\eta_2(1-\rho^2)}
\sum_{i=1}^{n}
\big(\rho \widetilde{t}_{1i} -\widetilde{t}_{2i}\big)
G(\widetilde{t}_{1i},\widetilde{t}_{2i})
, \nonumber
	\\[0,1cm]
	&{\partial\ell(\boldsymbol{\theta})\over\partial\sigma_1}
	=
	-\frac{n}{\sigma_1}
	+
	{2\over \sigma_1(1-\rho^2)}
		\sum_{i=1}^{n}
\widetilde{t}_{1i}
\big(\rho\widetilde{t}_{2i}-\widetilde{t}_{1i}\big)
G(\widetilde{t}_{1i},\widetilde{t}_{2i})
, \nonumber
	\\[0,1cm]
&{\partial\ell(\boldsymbol{\theta})\over\partial\sigma_2}
=
-\frac{n}{\sigma_2}
+
{2\over \sigma_2(1-\rho^2)}
\sum_{i=1}^{n}
\widetilde{t}_{2i}
\big(\rho\widetilde{t}_{1i}-\widetilde{t}_{2i}\big)
G(\widetilde{t}_{1i},\widetilde{t}_{2i})
, \nonumber
	\\[0,1cm]
	&{\partial \ell(\boldsymbol{\theta})\over\partial\rho}
	=
	{n\rho\over 1-\rho^2}
	-
	{2\over (1-\rho^2)^2}
	\sum_{i=1}^{n}
\big(\rho \widetilde{t}_{1i}-\widetilde{t}_{2i}\big)
\big(\rho\widetilde{t}_{2i}-\widetilde{t}_{1i}\big)
%
G(\widetilde{t}_{1i},\widetilde{t}_{2i}), \label{rho-mle}
\end{align}
with the notation
\begin{align*}
&G(\widetilde{t}_{1i},\widetilde{t}_{2i})
=
g_c'\Biggl(
	{\widetilde{t_{1i}}^2-2\rho\widetilde{t}_{1i}\widetilde{t}_{2i}+\widetilde{t_{2i}}^2
		\over 
		1-\rho^2}
	\Biggr)
	{\Bigg /}
	g_c\Biggl(
	{\widetilde{t_{1i}}^2-2\rho\widetilde{t}_{1i}\widetilde{t}_{2i}+\widetilde{t_{2i}}^2
		\over 
		1-\rho^2}
	\Biggr),
	\quad i=1,\ldots,n.
\end{align*}
A simple observation shows that the likelihood equations in \eqref{likelihood equation} can be rewritten as follows:
\begin{align*}
&\sum_{i=1}^{n}
\widetilde{t}_{1i}\,
G(\widetilde{t}_{1i},\widetilde{t}_{2i})
\bigg\vert_{{\boldsymbol{\theta}}=\widehat{\boldsymbol{\theta}}}
=0,
\\[0,1cm]
&
\sum_{i=1}^{n}
\big(\widetilde{t}_{1i}^2-\widetilde{t}_{2i}^2\big)\,
G(\widetilde{t}_{1i},\widetilde{t}_{2i})
\bigg\vert_{{\boldsymbol{\theta}}=\widehat{\boldsymbol{\theta}}}
=0,
\\[0,2cm]
&
\sum_{i=1}^{n}
\widetilde{t_{2i}}
\left[2\rho
\widetilde{t_{2i}}
-
(1+\rho^2) \widetilde{t}_{1i}\right]
G(\widetilde{t}_{1i},\widetilde{t}_{2i})
\bigg\vert_{{\boldsymbol{\theta}}=\widehat{\boldsymbol{\theta}}}
=-{n\widehat{\rho}(1-\widehat{\rho}^2)\over 2}\,.
\end{align*}
Any nontrivial root $\widehat{\boldsymbol{\theta}}$ of the above likelihood equations is known as
an ML estimator in the loose sense. When the parameter value provides the absolute maximum of the log-likelihood function, it is called an ML estimator in the strict sense.

In the following proposition, we discuss the existence of the ML estimator $\widehat{\rho}$ when the other parameters are all known.
\begin{proposition}\label{prop-existence-MLE}
Let $g_c$ be a density generator satisfying the following condition: 
\begin{align}\label{condition-g}
g'_c(x)=r(x) g_c(x),\quad -\infty<x<\infty,
\end{align}
for some real-valued function $r(x)$ such that $\lim_{\rho\to \pm 1} r(x_{\rho,i})=c\in(-\infty,0)$, where $x_{\rho,i}= (\widetilde{t_{1i}}^2-2\rho\widetilde{t}_{1i}\widetilde{t}_{2i}+\widetilde{t_{2i}}^2)/(1-\rho^2)$, $i=1,\ldots,n$.
If the parameters $\eta_1,\eta_2,\sigma_1$ and $\sigma_2$ are all known, then \eqref{rho-mle} has at least one root in the interval $(-1, 1)$.

\end{proposition}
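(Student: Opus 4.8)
The plan is to apply the Intermediate Value Theorem to the function $h(\rho):=\partial\ell(\boldsymbol{\theta})/\partial\rho$ displayed in \eqref{rho-mle}, viewing $\eta_1,\eta_2,\sigma_1,\sigma_2$ (hence all the $\widetilde{t}_{ki}$) as fixed constants, so that $x_{\rho,i}=(\widetilde{t_{1i}}^2-2\rho\widetilde{t}_{1i}\widetilde{t}_{2i}+\widetilde{t_{2i}}^2)/(1-\rho^2)$ is a smooth function of $\rho\in(-1,1)$. By \eqref{condition-g} one has $G(\widetilde{t}_{1i},\widetilde{t}_{2i})=g_c'(x_{\rho,i})/g_c(x_{\rho,i})=r(x_{\rho,i})$, and since $g_c$ is a positive $C^1$ density generator, $h$ is continuous on $(-1,1)$. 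To handle the blow-up at the endpoints it is convenient to work instead with
\[
\widetilde{h}(\rho):=(1-\rho^2)^2\, h(\rho)=n\rho(1-\rho^2)-2\sum_{i=1}^{n}(\rho\widetilde{t}_{1i}-\widetilde{t}_{2i})(\rho\widetilde{t}_{2i}-\widetilde{t}_{1i})\,r(x_{\rho,i}),
\]
which has exactly the same zeros as $h$ on $(-1,1)$ because $(1-\rho^2)^2>0$ there.

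Next I would compute the one-sided limits of $\widetilde{h}$ at $\rho=\pm 1$. The prefactor $n\rho(1-\rho^2)$ vanishes at both endpoints, so only the finite sum survives. As $\rho\to 1^-$, $\rho\widetilde{t}_{1i}-\widetilde{t}_{2i}\to\widetilde{t}_{1i}-\widetilde{t}_{2i}$, $\rho\widetilde{t}_{2i}-\widetilde{t}_{1i}\to\widetilde{t}_{2i}-\widetilde{t}_{1i}$, and $r(x_{\rho,i})\to c$ by hypothesis, so
\[
\lim_{\rho\to 1^-}\widetilde{h}(\rho)=2c\sum_{i=1}^{n}(\widetilde{t}_{1i}-\widetilde{t}_{2i})^2\leqslant 0;
\]
similarly, using $\rho\widetilde{t}_{ki}\to-\widetilde{t}_{ki}$ at the other end,
\[
\lim_{\rho\to -1^+}\widetilde{h}(\rho)=-2c\sum_{i=1}^{n}(\widetilde{t}_{1i}+\widetilde{t}_{2i})^2\geqslant 0,
\]
both signs being forced by $c<0$. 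Thus $\widetilde{h}$ extends continuously to $[-1,1]$ with $\widetilde{h}(-1)\geqslant 0\geqslant\widetilde{h}(1)$, and whenever these inequalities are strict the Intermediate Value Theorem yields $\rho_0\in(-1,1)$ with $\widetilde{h}(\rho_0)=0$, i.e., equation \eqref{rho-mle} has a root in $(-1,1)$. (Unpacked, this says that the $1/(1-\rho^2)^2$ term dominates the $n\rho/(1-\rho^2)$ term near each endpoint, so $h(\rho)\to-\infty$ as $\rho\to 1^-$ and $h(\rho)\to+\infty$ as $\rho\to -1^+$.)

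The only real obstacle is the degenerate configuration in which $\sum_i(\widetilde{t}_{1i}-\widetilde{t}_{2i})^2=0$ or $\sum_i(\widetilde{t}_{1i}+\widetilde{t}_{2i})^2=0$ — i.e., the log-scale sample points all lie on the line $\widetilde{t}_1=\pm\widetilde{t}_2$ — in which case the corresponding endpoint limit of $\widetilde{h}$ equals $0$ rather than being strictly signed. There one must push the expansion one order further in $1\mp\rho$; for instance, if $\widetilde{t}_{1i}\equiv\widetilde{t}_{2i}=:a_i$ then $(\rho\widetilde{t}_{1i}-\widetilde{t}_{2i})(\rho\widetilde{t}_{2i}-\widetilde{t}_{1i})=a_i^2(1-\rho)^2$ and $x_{\rho,i}=2a_i^2/(1+\rho)$, from which the sign of $\widetilde{h}$ near $\rho=1$ can be recovered — or one may simply assume this measure-zero case away. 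Modulo it, the argument is routine; the role of the hypothesis on $r$ is precisely to make both boundary limits finite and to supply the sign $c<0$ needed in the Intermediate Value Theorem step.
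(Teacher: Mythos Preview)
Your argument is essentially the paper's: identify $G(\widetilde{t}_{1i},\widetilde{t}_{2i})=r(x_{\rho,i})$, analyse the sign of $\partial\ell/\partial\rho$ near the endpoints $\rho=\pm 1$, and invoke the Intermediate Value Theorem. The paper simply asserts $\lim_{\rho\to 1^-}\partial\ell/\partial\rho=-\infty$ and $\lim_{\rho\to -1^+}\partial\ell/\partial\rho=+\infty$ without further comment; your rescaling $\widetilde{h}(\rho)=(1-\rho^2)^2 h(\rho)$ makes this explicit and, more to the point, exposes the degenerate data configurations $\widetilde{t}_{1i}\equiv\pm\widetilde{t}_{2i}$ under which the dominant term vanishes and the asserted divergence is not automatic. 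The paper's proof tacitly assumes these measure-zero cases away; you flag them and indicate how a higher-order expansion would resolve them, which is a genuine refinement rather than a different method.
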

\begin{proof}
As $g'_c(x_{\rho,i})=r(x_{\rho,i}) g_c(x_{\rho,i})$, we have
$G(\widetilde{t}_{1i},\widetilde{t}_{2i})=r(x_{\rho,i})$. Then, by using the condition that $\lim_{\rho\to \pm 1} r(x_{\rho,i})=c<0$, from \eqref{rho-mle}, we see that
\begin{align*}
\lim_{\rho\to 1^-}
{\partial \ell(\boldsymbol{\theta})\over\partial\rho}
=-\infty
\quad \text{and} \quad
\lim_{\rho\to -1^+}
{\partial \ell(\boldsymbol{\theta})\over\partial\rho}
=+\infty.
\end{align*}
Hence, by Intermediate value theorem, there
exists at least one solution in the interval $(-1, 1)$.
\end{proof}
\begin{remark}
Observe that in Table \ref{table:1}, the density generators
of the Bivariate Log-normal (or Bivariate Log-power-exponential with $\xi= 0$) and  the Bivariate Log-Kotz type (with $\delta=1$) 
satisfy the hypotheses of Proposition \ref{prop-existence-MLE} with $r(x_{\rho,i})=-1/2$ and $r(x_{\rho,i})=(-\lambda x_{\rho,i}+\xi-1)/x_{\rho,i} {\longrightarrow}-\lambda$ as $\rho\to\pm 1$, $\forall i=1,\ldots,n$, respectively. Then, Proposition \ref{prop-existence-MLE} can be applied to guarantee the existence of an ML estimator $\widehat{\rho}$ of $\rho$ in the loose sense.

On the other hand, the  density generators of the Bivariate Log-Kotz type (with $\delta<1$), 
the Bivariate Log-Student-$t$, 
the Bivariate Log-Pearson Type VII and
the Bivariate Log-power-exponential (with $\xi\neq 0$)
%
%
satisfy the condition \eqref{condition-g} with
$r(x)=(-\lambda\delta x^\delta+\xi-1)/x$,
$r(x)=-(\nu+2)/2(1+{x\over\nu})$, 
$r(x)=-\xi/(\theta+x)$
and
$r(x)=-x^{-\xi/(\xi+1)}/2(\xi+1)$,  
respectively, but in all these cases $ r(x_{\rho,i})\longrightarrow 0$ as $\rho\to\pm 1$, $\forall i=1,\ldots,n$.
\end{remark}

For the BLS model, no closed-form solution to the maximization problem is available, and an MLE can only be found via numerical optimization. Under mild regularity conditions \citep{Cox1974,Davison2008}, the asymptotic distribution of ML estimator  $\widehat{\boldsymbol{\theta}}$ of $\boldsymbol{\theta}$ is easily determined by the convergence in law: $(\widehat{\boldsymbol{\theta}}-\boldsymbol{\theta})\stackrel{\mathscr D}{\longrightarrow} N(\boldsymbol{0},I^{-1}(\boldsymbol{\theta}))$,
where  
$\boldsymbol{0}$ is the zero mean vector and $I^{-1}(\boldsymbol{\theta})$ is the inverse expected 
Fisher information matrix.
The main use of the last convergence is to construct confidence regions and to perform
hypothesis testing for $\boldsymbol{\theta}$ \citep{Davison2008}. 


\section{Simulation study}\label{Sec:5}

In this section, we carry out a Monte Carlo (MC) simulation study to evaluate the performance of the above described ML estimators for the BLS models. We use different sample sizes and parameter settings, using log-slash, log-power-exponential and log-normal distributions.

The simulation scenario considers the following setting: 1,000 MC replications, sample size $n \in (25,50,100,150)$, vector of true parameters $(\eta_1,\eta_2,\sigma_1,\sigma_2)= (1,1,0.5,0.5)$, $\rho \in \{0,0.25,0.5,0.75,0.95\}$, $\nu = 4$ (log-slash), and $\xi = 0.3$ (log-power-exponential). The extra parameters of the chosen distributions are assumed to be fixed; see \cite{Sauloetal2022} for details.

The performance and recovery of the ML estimators were evaluated through the empirical bias and the mean square error (MSE), which are calculated from the MC replicates, as
\begin{align}
 \text{Bias}(\widehat{\theta}) = \frac{1}{N}\sum_{i=1}^N\widehat{\theta}^{(i)} - \theta,
 \quad
  \quad
 \text{MSE}(\widehat{\theta}) = \frac{1}{N}\sum_{i=1}^N(\widehat{\theta}^{(i)} - \theta)^2,
\end{align}
where $\theta$ and $\widehat{\theta}^{(i)}$ are the true value of the parameter and its respective $i$th estimate, and $N$ is the number of MC replications. The steps for the MC simulation study are described in Algorithm 1 below:


\begin{table}[H]
\resizebox{\linewidth}{!}{
\begin{tabular}{l}
\hline
\textbf{Algorithm 1.} Simulation                                               \\ \hline
1. Choose the BLS distribution based on Table 1 and define the value of the parameters of the chosen distribution; \\
2. Generate 1,000 samples of size $n$ from the chosen model; \\
3. Estimate the model parameters using the ML method for each sample; \\
4. Compute the empirical bias and MSE.                                                                           \\ \hline
\end{tabular}}
\end{table}

The simulation results are presented in Tables \ref{table:2}-\ref{table:4}. We observe that the results obtained for the chosen distributions are as expected. As the sample size increases, the bias and MSE tend to decrease. Moreover, in general, the results do not seem to depend on the parameter $\rho$.

\begin{table}[htpb!]\label{Log-Slash}
\caption{Monte Carlo simulation results for the bivariate log-slash distribution with $\nu = 4$.}
\resizebox{\linewidth}{!}{
\begin{tabular}{llcccccccccccccc}
\hline
\multirow{2}{*}{$n$} & \multirow{2}{*}{$\rho$} & \multicolumn{10}{l}{MLE}                                                                                                                                   \\ \cline{3-12}
                   &                         & \multicolumn{2}{c}{$\widehat{\eta}_1$} & \multicolumn{2}{c}{$\widehat{\eta}_2$} & \multicolumn{2}{c}{$\widehat{\sigma}_1$} & \multicolumn{2}{c}{$\widehat{\sigma}_2$} & \multicolumn{2}{c}{$\widehat{\rho}$} \\
                   \noalign{\hrule height 1.7pt}
                   &                         & Bias          & MSE          & Bias          & MSE          & Bias            & MSE          & Bias           & MSE           & Bias          & MSE        \\
25                 & 0.0                     & 0.0093   & 0.0138  & 0.0085   & 0.0145      & -0.1040         & 0.0154       & -0.1004   & 0.0143  & -0.0053       & 0.0426     \\
                   & 0.25                    & 0.0078   & 0.0137  & 0.0096   & 0.0150       & -0.1053         & 0.0148       & -0.0988   & 0.0140  & -0.0079       & 0.0412     \\
                   & 0.50                    & 0.0050   & 0.0126  & 0.0060   & 0.0133       & -0.1008         & 0.0142       & -0.1046   & 0.0149  & -0.0145       & 0.0276     \\
                   & 0.95                    & 0.0072   & 0.0150  & 0.0064   & 0.0148       & -0.1055         & 0.0151       & -0.1048   & 0.0150  & -0.0046       & 0.0007     \\ \hline
50                 & 0.0                     & 0.0056   & 0.0067  & 0.0030   & 0.0071       & -0.0967        & 0.0113       & -0.0947   & 0.0110  & -0.0065      & 0.0230     \\
                   & 0.25                    & 0.0053   & 0.0069  & 0.0017   & 0.0072       & -0.0963         & 0.0114       & -0.0968   & 0.0114  & -0.0071       & 0.0230     \\
                   & 0.50                    & 0.0033   & 0.0065  & 0.0041   & 0.0073       & -0.0959         & 0.0112       & -0.0951   & 0.0111  & -0.0013       & 0.0130     \\
                   & 0.95                    & 0.0060   & 0.0071  & 0.0066   & 0.0070       & -0.0944         & 0.0109       & -0.0938   & 0.0108  & -0.0003       & 0.0002     \\ \hline
100                & 0.0                     & 0.0006   & 0.0037  & 0.0023   & 0.0034       & -0.0955         & 0.0101       & -0.0944   & 0.0098  & 0.0015        & 0.0122     \\
                   & 0.25                    & -0.0016  & 0.0033  & -0.0007  & 0.0035       & -0.0929         & 0.0096       & -0.0943   & 0.0098  & -0.0056       & 0.0095     \\
                   & 0.50                    & 0.0003   & 0.0035  & 0.0023   & 0.0034       & -0.0949         & 0.0100       & -0.0962   & 0.0102  & -0.0047       & 0.0066     \\
                   & 0.95                    & 0.0009   & 0.0036  & 0.0015   & 0.0035       & -0.0955         & 0.0101       & -0.0960   & 0.0102  & -0.0012       & 0.0001     \\ \hline
150                & 0.0                     & 0.0034   & 0.0022  & 0.0033   & 0.0024       & -0.0953         & 0.0098       & -0.0941   & 0.0095  & 0.0050        & 0.0075     \\
                   & 0.25                    & 0.0000   & 0.0023  & -0.0004  & 0.0022       & -0.0933         & 0.0093       & -0.0934   & 0.0094  & -0.0045       & 0.0066     \\
                   & 0.50                    & 0.0027   & 0.0022  & 0.0038   & 0.0023       & -0.0942         & 0.0095       & -0.0932   & 0.0094  & -0.0015       & 0.0039     \\
                   & 0.95                    & 0.0007   & 0.0023  & 0.0011  & 0.0023       & -0.0936         & 0,0094       & -0.0937   & 0.0094  & -0.0006       & 0.0001     \\ \hline
\end{tabular}
}
\label{table:2}
\end{table}

\begin{table}[htpb]
\caption{Monte Carlo simulation results for the bivariate log-power-exponential distribution with $\xi = 0.3$.}
\resizebox{\linewidth}{!}{
\begin{tabular}{llcccccccccc}
\hline
\multicolumn{1}{c}{\multirow{2}{*}{$n$}} & \multirow{2}{*}{$\rho$} & MLE                      &                         &                          &                         &                          &                         &                          &                         &                             &                         \\ \cline{3-12}
\multicolumn{1}{c}{}                   &                         & \multicolumn{2}{c}{$\widehat{\eta}_1$}                       & \multicolumn{2}{c}{$\widehat{\eta}_2$}                       & \multicolumn{2}{c}{$\widehat{\sigma}_1$}                     & \multicolumn{2}{c}{$\widehat{\sigma}_2$}                     & \multicolumn{2}{c}{$\widehat{\rho}$}                               \\ \hline
                                       &                         & \multicolumn{1}{c}{Bias} & \multicolumn{1}{c}{MSE} & \multicolumn{1}{c}{Bias} & \multicolumn{1}{c}{MSE} & \multicolumn{1}{c}{Bias} & \multicolumn{1}{c}{MSE} & \multicolumn{1}{c}{Bias} & \multicolumn{1}{c}{MSE} & \multicolumn{1}{c}{Bias}    & \multicolumn{1}{c}{MSE} \\
\multirow{4}{*}{25}                    & 0.00                    & 0.0076                   & 0.0174                  & 0.0133                   & 0.0173                  & -0.0466                  & 0.0138                  & -0.0488                  & 0.0139                  & \multicolumn{1}{c}{-0.0038} & 0.0411                  \\
                                       & 0.25                    & 0.0048                   & 0.0155                  & 0.0076                   & 0.0169                  & -0.0434                  & 0.0079                  & -0.0432                  & 0.0082                  & \multicolumn{1}{c}{-0.0098} & 0.0399                  \\
                                       & 0.50                    & 0.0066                   & 0.0170                  & 0.0099                   & 0.0192                  & -0.0436                  & 0.0097                  & -0.0413                  & 0.0101                  & \multicolumn{1}{c}{-0.0047} & 0.0268                  \\
                                       & 0.95                    & 0.0108                   & 0.0170                  & 0.0113                   & 0.0171                  & -0.0423                  & 0.0095                  & -0.0431                  & 0.0097                  & -0.0025                     & 0.0006                  \\ \hline
\multirow{4}{*}{50}                    & 0.00                    & 0.0026                   & 0.0083                  & 0.0060                   & 0.0084                  & -0.0735                  & 0.0415                  & -0.0712                  & 0.0418                  & \multicolumn{1}{c}{0.0016}  & 0.0192                  \\
                                       & 0.25                    & 0.0062                   & 0.0090                  & 0.0017                   & 0.0081                  & -0.0572                  & 0.0243                  & -0.0597                  & 0.0258                  & 0.0003                      & 0.0181                  \\
                                       & 0.50                    & 0.0055                   & 0.0085                  & 0.0027                   & 0.0076                  & -0.0452                  & 0.0128                  & -0.0460                  & 0.0125                  & -0.0026                     & 0.0134                  \\
                                       & 0.95                    & 0.0062                   & 0.0086                  & 0.0068                   & 0.0088                  & -0.0364                  & 0.0047                  & -0.0358                  & 0.0047                  & -0.0010                     & 0.0003                  \\ \hline
\multirow{4}{*}{100}                   & 0.00                    & 0.0034                   & 0.0044                  & 0.0022                   & 0.0043                  & -0.0475                  & 0.0173                  & -0.0453                  & 0.0175                  & 0.0011                      & 0.0107                  \\
                                       & 0.25                    & 0.0023                   & 0.0041                  & 0.0025                   & 0.0042                  & -0.0331                  & 0.0044                  & -0.0319                  & 0.0044                  & 0.0011                      & 0.0088                  \\
                                       & 0.50                    & 0.0011                   & 0.0045                  & 0.0000                   & 0.0041                  & -0.0313                  & 0.0035                  & -0.0345                  & 0.0035                  & -0.0058                     & 0.0063                  \\
                                       & 0.95                    & 0.0034                   & 0.0043                  & 0.0027                   & 0.0043                  & -0.0344                  & 0.0041                  & -0.0344                  & 0.0040                  & -0.0009                     & 0.0001                  \\ \hline
\multirow{4}{*}{150}                   & 0.00                    & -0.0002                  & 0.0029                  & 0.0038                   & 0.0029                  & -0.0319                  & 0.0020                  & -0.0318                  & 0.0020                  & -0.0008                     & 0.0066                  \\
                                       & 0.25                    & 0.0001                   & 0.0028                  & 0.0036                   & 0.0026                  & -0.0300                  & 0.0019                  & -0.0303                  & 0.0019                  & -0.0018                     & 0.0067                  \\
                                       & 0.50                    & 0.0050                   & 0.0026                  & 0.0027                   & 0.0027                  & -0.0331                  & 0.0036                  & -0.0347                  & 0.0037                  & -0.0035                     & 0.0041                  \\
                                       & 0.95                    & 0.0014                   & 0.0029                  & 0.0014                   & 0.0029                  & -0.0306                  & 0.0019                  & -0.0299                  & 0.0018                  & -0.0002                     & 0.0001                  \\ \hline
\end{tabular}
}
\label{table:3}
\end{table}

\begin{table}[htpb!]
\caption{Monte Carlo simulation results for the bivariate log-normal distribution.}
\resizebox{\linewidth}{!}{
\begin{tabular}{llcccccccccc}
\hline
\multicolumn{1}{c}{\multirow{2}{*}{$n$}} & \multirow{2}{*}{$\rho$} & \multicolumn{1}{l}{MLE} &        &         & \multicolumn{1}{l}{} &         & \multicolumn{1}{l}{} & \multicolumn{1}{l}{} & \multicolumn{1}{l}{} & \multicolumn{1}{l}{} & \multicolumn{1}{l}{} \\ \cline{3-12}
\multicolumn{1}{c}{}                   &                         & \multicolumn{2}{c}{$\widehat{\eta}_1$}     & \multicolumn{2}{c}{$\widehat{\eta}_2$}   & \multicolumn{2}{c}{$\widehat{\sigma}_1$} & \multicolumn{2}{c}{$\widehat{\sigma}_2$}              & \multicolumn{2}{c}{$\widehat{\rho}$}                     \\ \hline
                                       &                         & Bias                    & MSE    & Bias    & MSE                  & Bias    & MSE                  & Bias                 & MSE                  & Bias                & MSE                  \\
\multirow{4}{*}{25}                    & 0.00                    & -0.0007                 & 0.0100 & 0.0034  & 0.0100               & -0.0312 & 0.0182               & -0.0265              & 0.0178               & -0.0044              & 0.0411               \\
                                       & 0.25                    & 0.0095                  & 0.0107 & 0.0085  & 0.0101               & -0.0336 & 0.0180               & -0.0320              & 0.0220               & -0.0151              & 0.0367               \\
                                       & 0.50                    & 0.0120                  & 0.0109 & 0.0159  & 0.0112               & -0.0313 & 0.0205               & -0.0282              & 0.0198               & -0.0106              & 0.0252               \\
                                       & 0.95                    & -0.0013                 & 0.0101 & -0.0011 & 0.0100               & -0.0289 & 0.0197               & -0.0296              & 0.0199               & -0.0026              & 0.0005               \\ \hline
\multirow{4}{*}{50}                    & 0.00                    & -0.0012                 & 0.0051 & 0.0055  & 0.0050               & -0.0394 & 0.0373               & -0.0433              & 0.0387               & 0.0006               & 0.0187               \\
                                       & 0.25                    & 0.0014                  & 0.0053 & 0.0023  & 0.0050               & -0.0445 & 0.0391               & -0.0446              & 0.0387               & -0.0014              & 0.0174               \\
                                       & 0.50                    & 0.0023                  & 0.0051 & -0.0027 & 0.0052               & -0.0370 & 0.0329               & -0.0363              & 0.0338               & 0.0044               & 0.0114               \\
                                       & 0.95                    & 0.0020                  & 0.0051 & 0.0026  & 0.0051               & -0.0151 & 0.0127               & -0.0148              & 0.0125               & 0.0000               & 0.0002               \\ \hline
\multirow{4}{*}{100}                   & 0.00                    & 0.0021                  & 0.0025 & 0.0020  & 0.0024               & -0.0043 & 0.0022               & -0.0067              & 0.0021               & 0.0079               & 0.0096               \\
                                       & 0.25                    & -0.0025                 & 0.0025 & -0.0031 & 0.0025               & -0.0063 & 0.0042               & -0.0073              & 0.0039               & 0.0008               & 0.0091               \\
                                       & 0.50                    & -0.0019                 & 0.0025 & -0.0007 & 0.0023               & -0.0035 & 0.0013               & -0.0052              & 0.0013               & 0.0001               & 0.0057               \\
                                       & 0.95                    & 0.0025                  & 0.0024 & 0.0029  & 0.0024               & -0.0061 & 0.0038               & -0.0063              & 0.0039               & -0.0001              & 0.0001               \\ \hline
\multirow{4}{*}{150}                   & 0.00                    & 0.0000                  & 0.0018 & -0.0018 & 0.0016               & -0.0101 & 0.0080               & -0.0096              & 0.0081               & -0.0024              & 0.0064               \\
                                       & 0.25                    & 0.0003                  & 0.0017 & 0.0001  & 0.0017               & -0.0041 & 0.0027               & -0.0035              & 0.0027               & 0.0016               & 0.0058               \\
                                       & 0.50                    & 0.0000                  & 0.0016 & -0.0001 & 0.0017               & -0.0088 & 0.0066               & -0.0082              & 0.0064               & -0.0021              & 0.0040               \\
                                       & 0.95                    & 0.0013                  & 0.0018 & 0.0016  & 0.0018               & -0.0115 & 0.0100               & -0.0118              & 0.0100               & -0.0006              & 0.0001               \\ \hline
\end{tabular}
}
\label{table:4}
\end{table}
\newpage

\section{Application to fatigue data analysis}\label{Sec:6}

In this section, we illustrate the proposed model and the inferential method using fatigue. The data are from \cite{Marchantetal2016}, in which they used a multivariate Birnbaum-Saunders regression model to represent the fatigue data. The fatigue concerns the process of material failure, caused by cyclic stress. Thus, fatigue is composed of crack initiation and propagation, until the material fractures. The calculation of fatigue life is important for determining the reliability of components or structures. Here, we consider the variables \textit{Von Mises stress} $(T_1$, \text{in} $N/mm^2)$ and \textit{die lifetime} ($T_2$, in number of cycles). According to \cite{Marchantetal2016}, die fracture is the fatigue of metal caused by cyclic stress in the course of the service life cycle of dies (die lifetime).

Table \ref{table:5} provides descriptive statistics for the variables \textit{Von Mises stress} ($T_1$) and \textit{die lifetime} ($T_2$), including the minimum, median, mean, maximum, standard deviation (SD), coefficient of variation, coefficient of skewness (CS), and coefficient of kurtosis (CK). We observe in the variable \textit{Von mises stress}, the mean and median to be, respectively, $1.247$ and $1.130$, i.e., the mean is larger than the median, which indicates a positive skewness in the data. The CV is $56.172\%$, which means a moderate level of dispersion around the mean. Furthermore, the CS value also confirms the skewed nature. The variable \textit{die lifetime} has mean equal to $23.761$ and median equal to $19.000$, which also indicate the positively skewed feature in the distribution of the data. Moreover, the CV value is $71.967\%$, showing a moderate level of dispersion around the mean. The CS confirms the skewed nature and the CK value indicates the high kurtosis feature in the data.

\begin{table}[H]
\caption{Summary statistics for the fatigue data set.}
\centering
\begin{tabular}{lccccccccc}
\hline
Variables        & $n$  & Minimum & Median & Mean   & Maximum & SD     & CV     & CS    & CK     \\ \hline
$T_1$ & 15 & 0.243   & 1.130  & 1.247  & 2.430   & 0.700  & 56.172 & 0.209 & -1.466 \\
$T_2$         & 15 & 6.420   & 19.900   & 23.761 & 74.800  & 17.100 & 71.967 & 1.631 & 2.495  \\ \hline
\end{tabular}
\label{table:5}
\end{table}

Table \ref{table:6} presents the ML estimates and the standard errors (in parentheses) for the bivariate log-symmetric model parameters. This table also presents the log-likelihood value, and the values of the Akaike (AIC) and Bayesian (BIC) information criteria. The extra parameters were estimated using the profile log-likelihood. From Table \ref{table:6}, we observe that the log-Laplace model provides better fit than other models based on the values of log-likelihood, AIC and BIC. Nevertheless, in general, the values of log-likelihood, AIC and BIC of all bivariate log-symmetric models are quite close to each other.

\begin{table}[H]
\caption{ML estimates (with standard errors in parentheses), log-likelihood, AIC and BIC values for the indicated bivariate log-symmetric models.}
\resizebox{\linewidth}{!}{
\begin{tabular}{lccccccccc}
\noalign{\hrule height 1.7pt}
Distribuiton &          $\widehat{\eta}_1$  & $\widehat{\eta}_2$ & $\widehat{\sigma}_1$ & $\widehat{\sigma}_2$ & $\widehat{\rho}$ & $\widehat{\nu}$ & Log-likelihood & AIC &   BIC  \\ \hline
Log-normal              & 1.0362*  & 19.4824*  & 0.6536*  & 0.6210*   & -0.9390*  & -  &   -58.117  & 126.23 &129.78\\
                        & (0.0175)  & (3.1239) &  (0.01192)      & (0.1133)    & (0.0305)       &           & \\
Log-Student-$t$         & 1.0188*          & 20.1932*          & 0.6111*          & 0.5508*          & -0.9514*          & 7                                         & -57.915          & 125.83          & 129.37          \\
                        & (0.1339)  & (2.0685)   & (0.2220)       & (0.1881)  & (0.0207)       &        &           & \\
Log-Pearson Type VII    & 1.0211*          & 20.1218*          & 0.3712*          & 0.3362*          & -0.9502*          & $\xi$ = 5 , $\theta$ = 22  & -57.917          & 125.83          & 129.37          \\
                        &  (0.1806)     &  (3.2095)  & (0.0761)   & (0.0698)  & (0.0280)        &        &           & \\
Log-hyperbolic          & 1.0175*          & 20.1900*          & 0.6843*          & 0.6201*          & -0.9504*          & 2                     & -57.922          & 125.84          & 129.38          \\
                         & (0.1910)    & (3.2818)    & (0.0376)  & (0.0333)    & (0.0276)       &        &           & \\
{Log-Laplace}    & {1.0594*} & {20.9110*} & {0.7748*} & {0.6809*} & {-0.9471*} & {-}                                & {-57.585} & {125.17} & {128.71} \\
                        & (0.0023)    & (0.0105)    & (0.2032)  & (0.1745)    & (0.0342)       &        &           & \\
Log-slash               & 1.0207*          & 20.1854*          & 0.5158*          & 0.4648*          & -0.9515*          & 5                                         & -57.945          & 125.89          & 129.43          \\
                        & (0.1783)    & (3.1973)    & (0.1030)  & (0.0955)    & (0.0277)       &        &           & \\
Log-power-exponential   & 1.0298*          & 19.9461*          & 0.4516*          & 0.4154*          & -0.9432*          & 0.37                                      & -57.984          & 125.97          & 129.51          \\
                        & (0.18445)    & (3.2182)    & (0.0935)  & (0.0852)    & (0.0294)       &        &           & \\
Log-logistic            & 1.0498*          & 18.8904*          & 0.7651*          & 0.7488*          & -0.9315*          & -                                         & -58.672          & 127.34          & 130.89          \\
                        & (0.1650)    & (3.0396)    & (0.1212)  & (0.1231)    & (0.0316)       &        &           & \\\hline
\end{tabular}
}
\label{table:6}
\footnotesize{$^*$ significant at 5\% level.}
\end{table}

Figure~\ref{fig:qqplots} shows the QQ plots of the Mahalanobis distances (see Subsection \ref{maha:sec}) for the models considered in Table \ref{table:6}. We see clearly that, with the exception of the log-Student-$t$ case, the Mahalanobis distances in the bivariate log-symmetric models conform relatively well with their reference distributions.

\begin{figure}[!ht]
\centering
\subfigure[Log-normal]{\includegraphics[height=5cm,width=5cm]{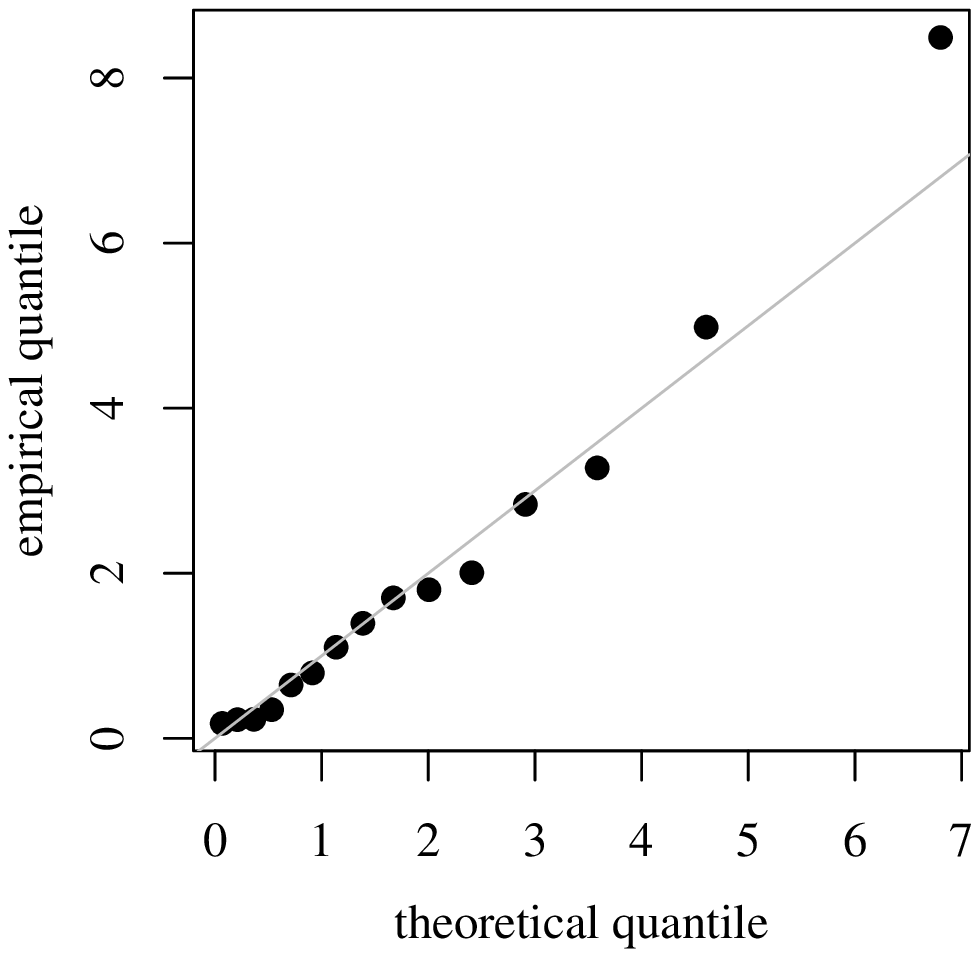}}
\subfigure[Log-Student-$t$]{\includegraphics[height=5cm,width=5cm]{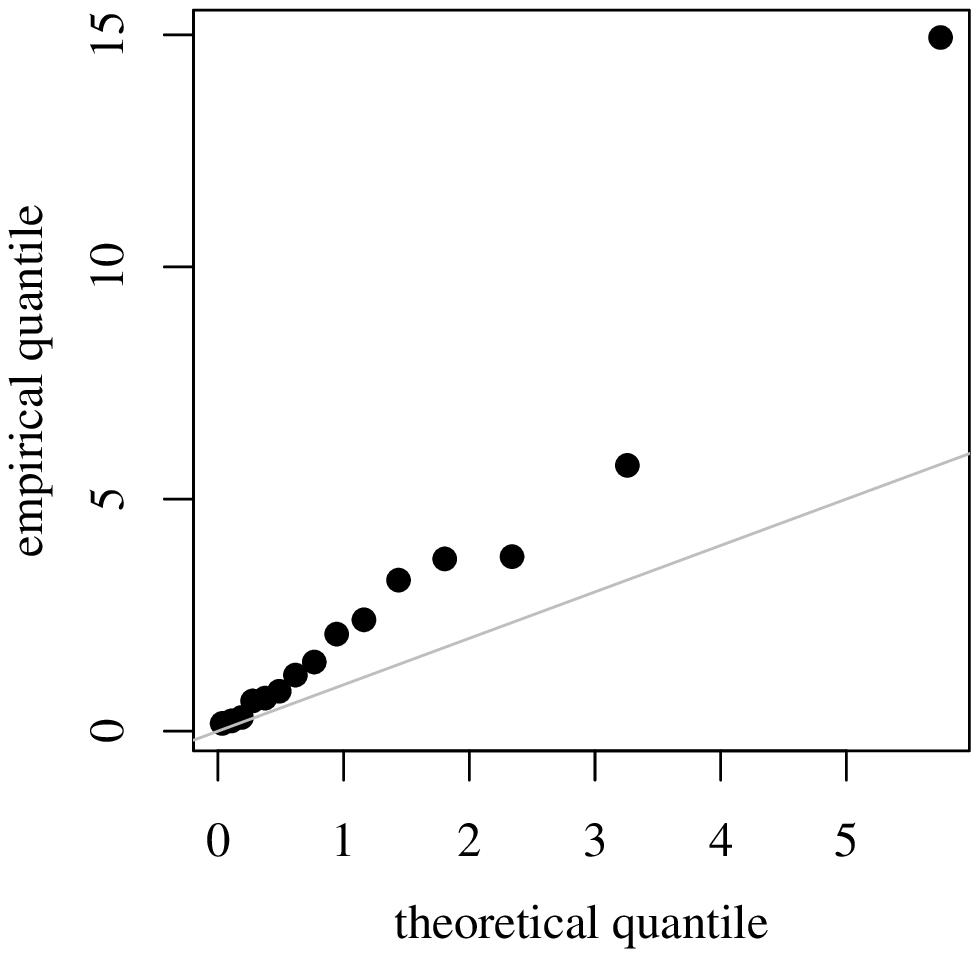}}
\subfigure[Log-Pearson Type VII]{\includegraphics[height=5cm,width=5cm]{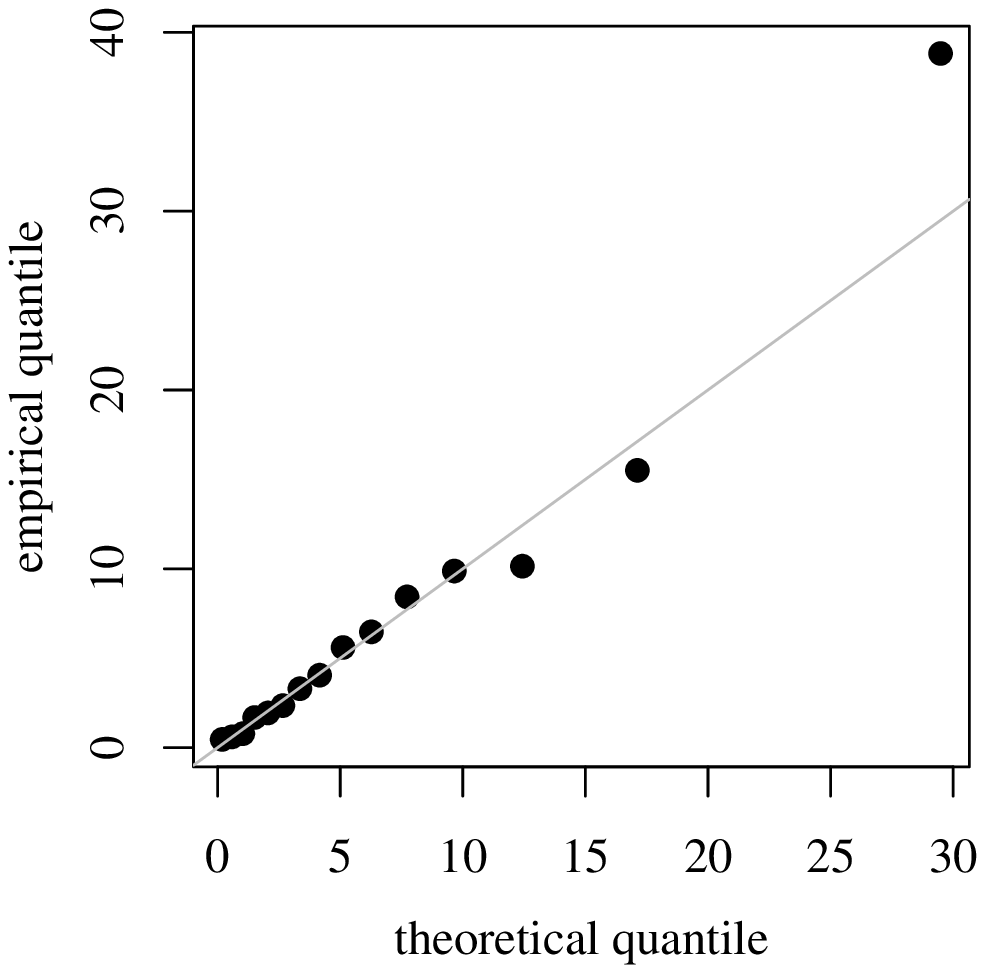}}\\
\subfigure[Log-hyperbolic]{\includegraphics[height=5cm,width=5cm]{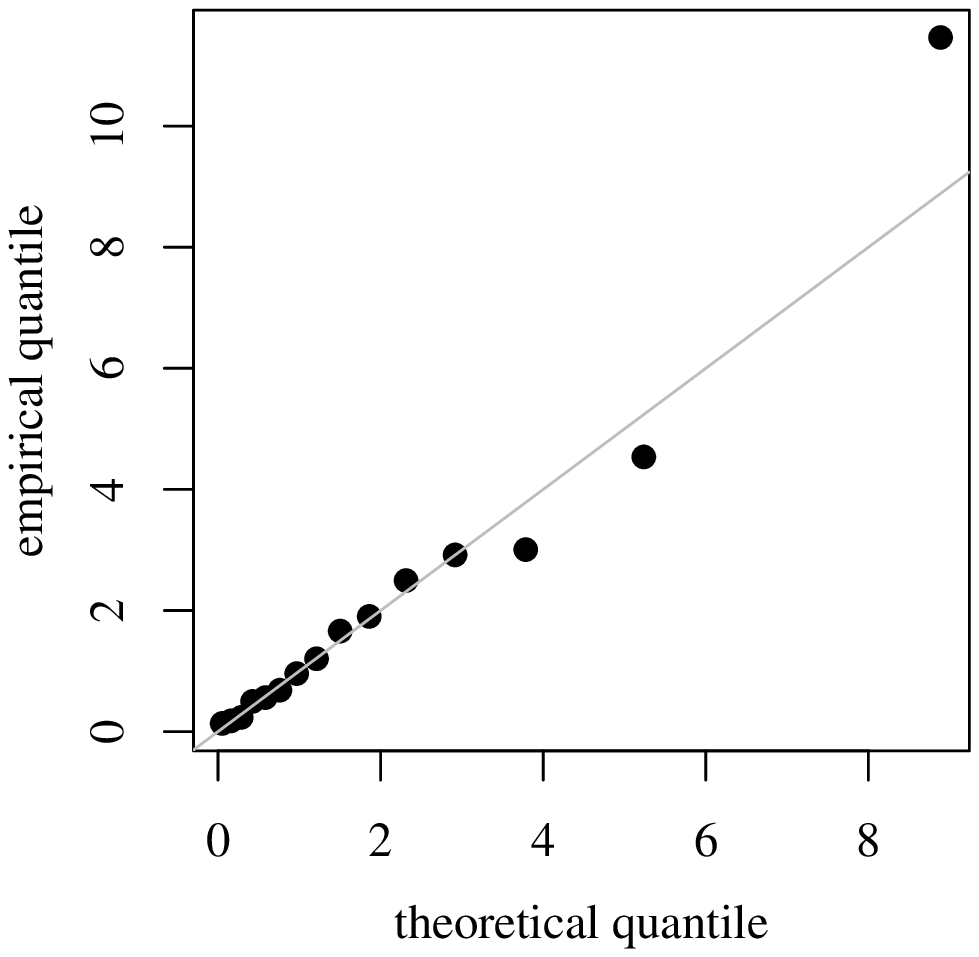}}
\subfigure[Log-Laplace]{\includegraphics[height=5cm,width=5cm]{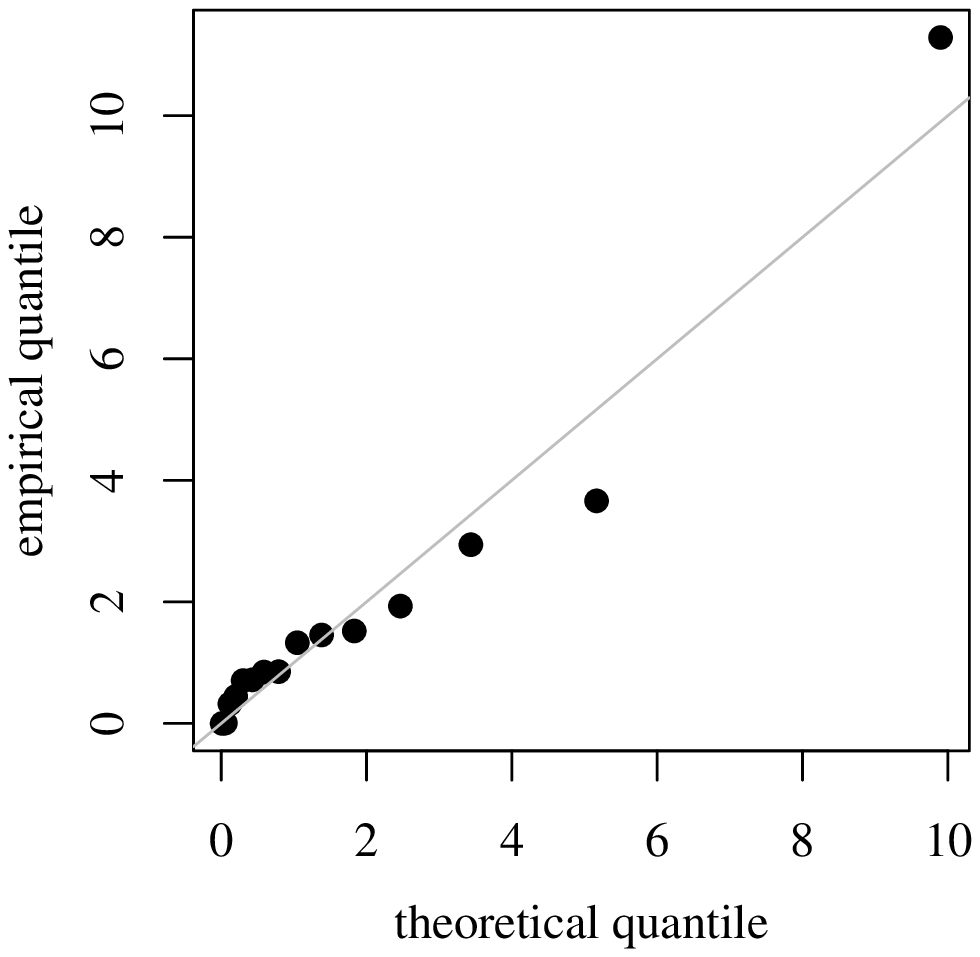}}
\subfigure[Log-slash]{\includegraphics[height=5cm,width=5cm]{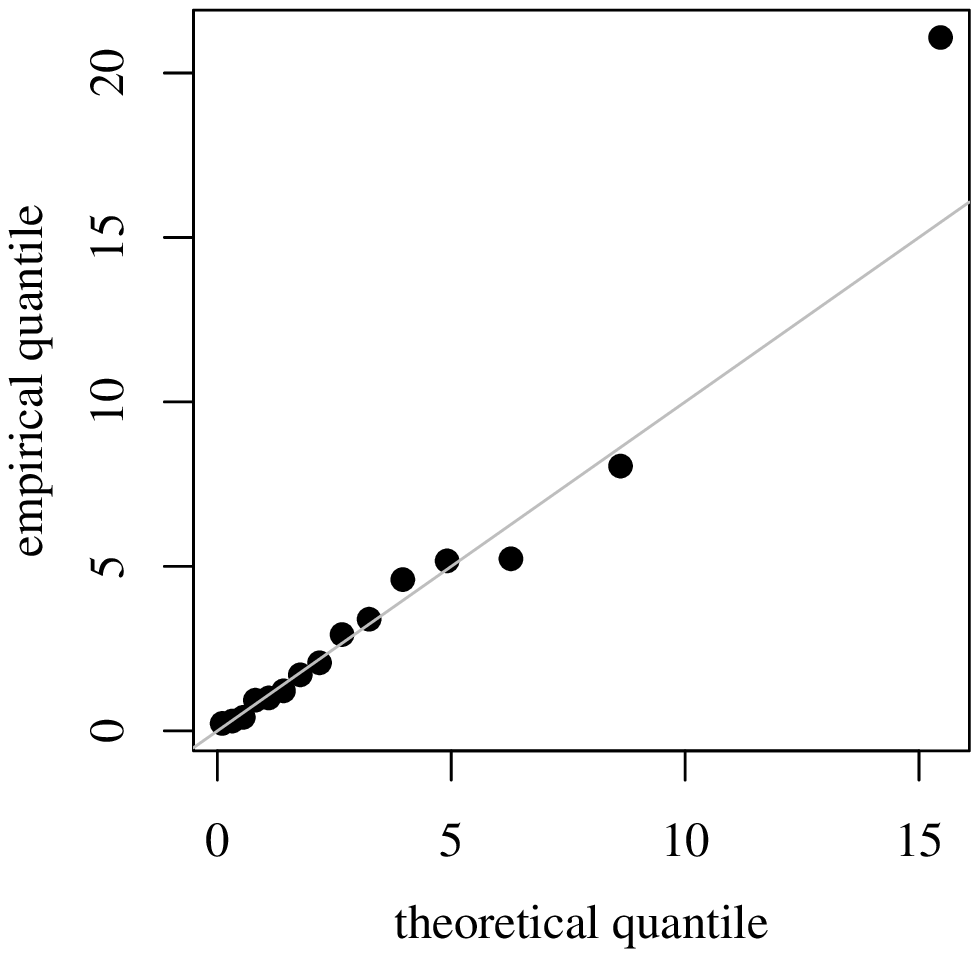}}
\subfigure[Log-power-exponential]{\includegraphics[height=5cm,width=5cm]{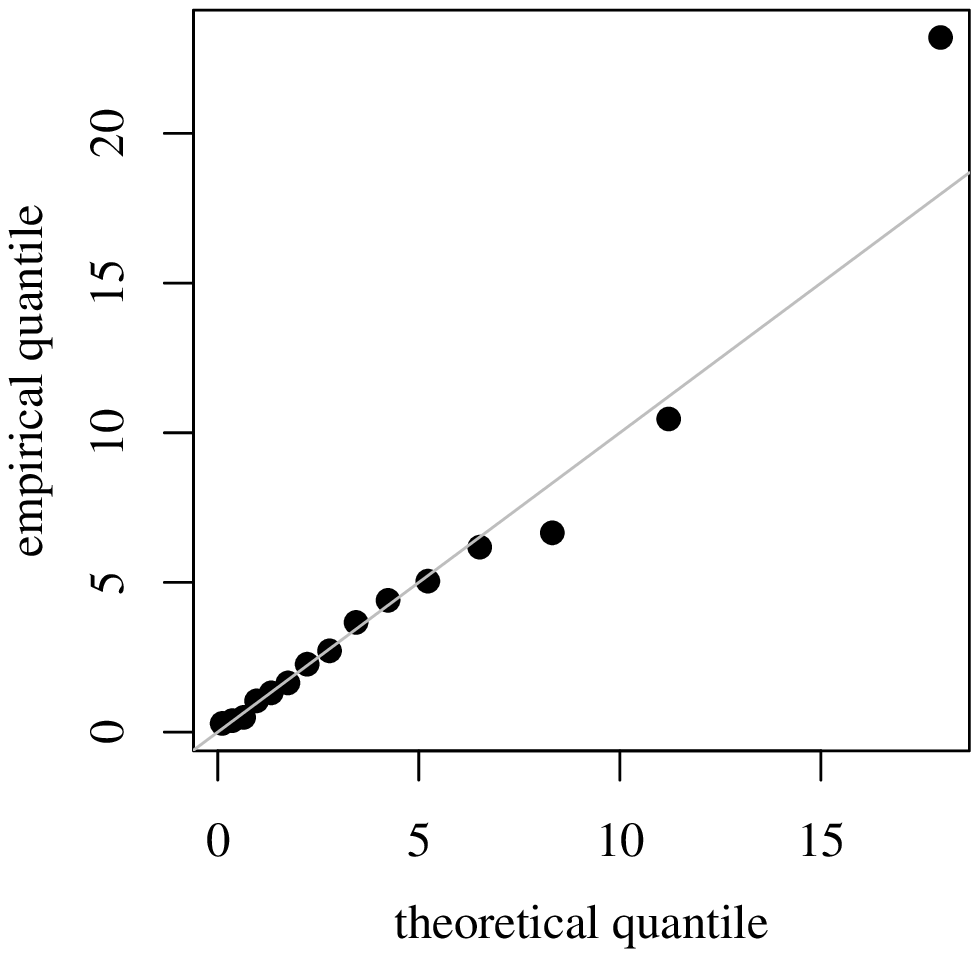}}
\subfigure[Log-logistic]{\includegraphics[height=5cm,width=5cm]{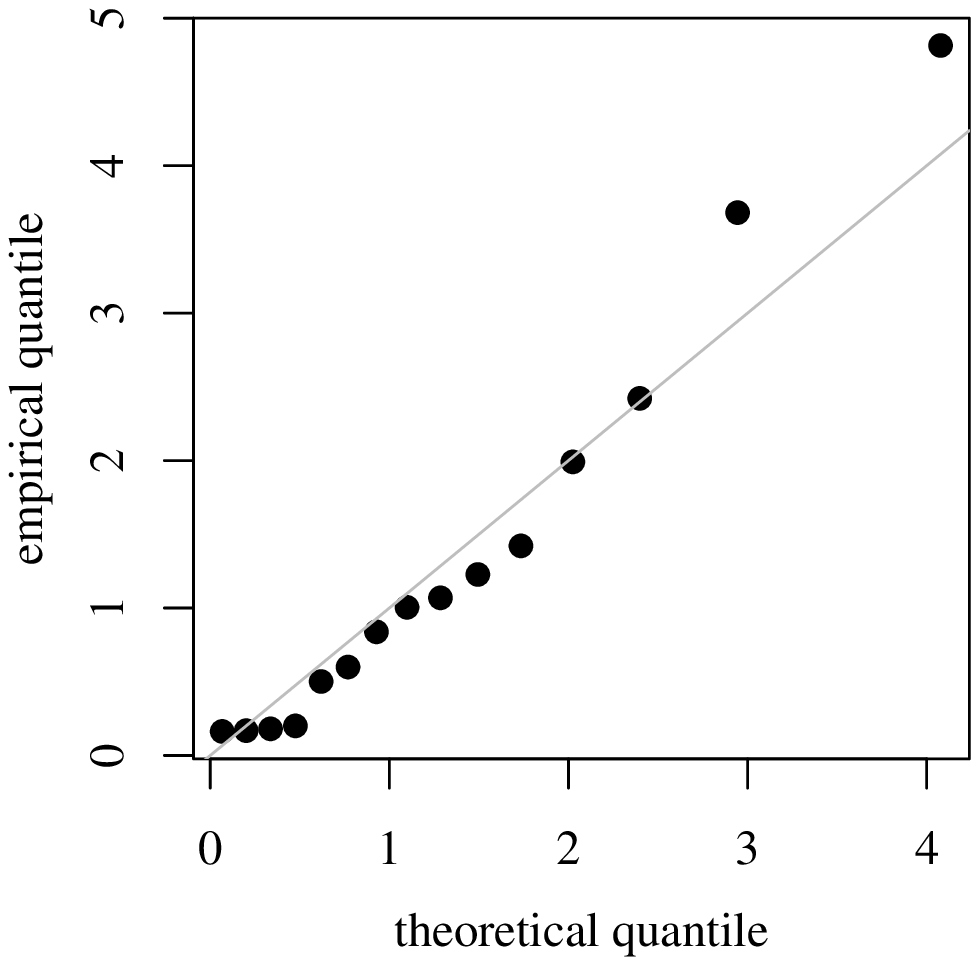}}
 \caption{\small {QQ plot of the Mahalanobis distances for the indicated models.}}
\label{fig:qqplots}
\end{figure}

\section{Concluding Remarks}\label{Sec:7}

In this paper, we have introduced a class of bivariate log-symmetric models, which is the result of an exponential transformation on a variable that follows a bivariate symmetric distribution. We have studied some of its statistical properties and also discussed the maximum likelihood estimation of the model parameters. A Monte Carlo simulation study has been carried out to evaluate the performance of the maximum likelihood estimators. The simulation results have shown that the estimators perform very well, with empirical bias values being close to zero. We have applied the proposed models to a real fatigue data set, and the results are seen to be favorable to the log-Laplace model among all the models considered. As part of future research, it will be of interest to study bivariate log-symmetric regression models. Furthermore, some hypothesis and misspecification tests (via Monte Carlo simulation) could all be studied. Work on these problems is currently in progress and we hope to report these findings in future.

\paragraph{Acknowledgements}
This study was financed in part by the Coordenação de 
Aperfeiçoamento de Pessoal de Nível Superior - Brasil (CAPES) 
(Finance Code 001). Roberto Vila and Helton Saulo gratefully acknowledge financial support from CNPq and FAP-DF, Brazil.

\paragraph{Disclosure statement}
There are no conflicts of interest to disclose.




\end{document}